\newtheorem{theorem}{Theorem}
\newtheorem{proposition}[theorem]{Proposition}
\newtheorem{corollary}[theorem]{Corollary}
\theoremstyle{definition}
\newtheorem{definition}[theorem]{Definition}
\newcommand{\cornerhere}{\hfill$\lrcorner$\gdef\ExampleEndMarker{}}
\newenvironment{Example}{\gdef\ExampleEndMarker{\hfill$\lrcorner$}\example}{\ExampleEndMarker\endexample}
\newcommand{\E}{\exists}
\newcommand{\A}{\forall}
\renewcommand{\phi}{\varphi}
\renewcommand{\emptyset}{\varnothing}
\newcommand*{\ext}[1]{[\![ #1 ]\!]}
\DeclareMathOperator{\lfp}{\mathbf{lfp}}
\DeclareMathOperator{\gfp}{\mathbf{gfp}}
\newcommand{\dcup}{\dot\cup}
\newcommand*{\tup}[1]{\mathbf{#1}}
\newcommand{\ta}{\tup a}
\newcommand{\tb}{\tup b}
\newcommand{\tx}{\tup x}
\newcommand{\ty}{\tup y}
\newcommand{\co}{\colon}
\renewcommand{\AA}{{\mathfrak A}}
\newcommand{\N}{{\mathbb N}} 
\newcommand{\K}{{\mathbb K}} 
\newcommand{\Gg}{\mathsf{G}} 
\newcommand{\Ss}{\mathsf{S}} 
\newcommand{\Tt}{{\mathsf T}} 
\DeclareMathOperator{\Strat}{\mathrm{Strat}}
\DeclareMathOperator{\WinStrat}{\mathrm{WinStrat}}
\DeclareMathOperator{\Lit}{\mathrm{Lit}}
\newcommand{\Bool}{\mathbb{B}}
\newcommand{\Nat}{\mathbb{N}}
\newcommand{\Sinf}{{\mathbb S}^{\infty}}
\newcommand{\Trop}{\mathbb{T}}
\newcommand{\Vit}{\mathbb{V}}
\newcommand{\PosBool}{\mathsf{PosBool}}
\renewcommand{\bar}{\overline}
\newcommand*{\ps}[1]{[\![ #1 ]\!]}
\newcommand*{\nn}[1]{\bar{#1}}
\newcommand{\nnX}{\nn{X}}
\newcommand{\nnx}{\nn{x}}
\newcommand{\Inf}{\bigsqcap}
\newcommand{\Sup}{\bigsqcup}
\newcommand{\Bone}{\mathbf{1}}
\newcommand{\Bzero}{\mathbf{0}}
\newcommand{\bcdot}{\boldsymbol\cdot} 
\newcommand{\FormulaWin}{\mathsf{win}_0}
\newcommand{\absorb}{\succeq}
\newcommand{\absorbneq}{\succ}
\newcommand{\var}{\mathsf{var}}
\newcommand{\pitrack}{\pi_{\text{\sffamily strat}}}
\newcommand{\pirev}{\pi_{\text{\sffamily rep}}}
\newcommand{\prefix}{\sqsubseteq}
\newcommand{\ecount}[2]{\#_{#2}(#1)} 
\newcommand{\ep}[1]{\#_E(#1)} 
\tikzset{
    arr/.style={draw,->,>=stealth',shorten <=2pt,shorten >=2pt,every node/.style={auto,inner sep=2pt,font=\scriptsize}},
    gamenode/.style={draw,inner sep=2pt,minimum size=.4cm},
    p0/.style={gamenode,circle},
    p1/.style={gamenode,rectangle},
    F/.style={thick, pattern=north east lines},
    dot/.style={circle,draw,fill,black,minimum size=3pt,inner sep=0pt},
    marker/.style={draw=none,inner sep=0pt,overlay},
    short/.style={ shorten >=#1, shorten <=#1 }
}
\title{Semiring Provenance for Büchi Games:\\Strategy Analysis with Absorptive Polynomials}
\author{Erich Grädel
\institute{RWTH Aachen University\\Aachen, Germany}
\email{graedel@logic.rwth-aachen.de}
\and
Niels Lücking
\institute{RWTH Aachen University \\ Aachen, Germany}
\email{niels.luecking@rwth-aachen.de}
\and
Matthias Naaf
\institute{RWTH Aachen University \\ Aachen, Germany}
\email{naaf@logic.rwth-aachen.de}
}
\begin{document}
\maketitle

\begin{abstract}
This paper presents a case study for the application of semiring semantics for fixed-point formulae
to the analysis of strategies in Büchi games.
Semiring semantics generalizes the classical Boolean semantics by permitting multiple truth values from certain semirings.
Evaluating the fixed-point formula that defines the winning region in a given game in an appropriate
semiring of polynomials provides not only the Boolean information on \emph{who} wins, but also tells us \emph{how} they win and \emph{which strategies} they might use. This is well-understood for reachability games, where the winning region is definable as a least fixed point.
The case of Büchi games is of special interest, not only due to their practical importance, but also because it is the simplest case where
the fixed-point definition involves a genuine alternation of a greatest and a least fixed point.

We show that, in a precise sense, semiring semantics provide information about all \emph{absorption-dominant} strategies -- strategies that win with minimal effort,
and we discuss how these relate to positional and the more general \emph{persistent} strategies.
This information enables applications such as game synthesis or determining minimal modifications to the game needed to change its outcome.\\

\noindent
Due to space reasons, several proofs have been omitted and can be found
in the \href{https://arxiv.org/abs/2106.12892}{full version} \cite{GraedelLucNaa21}.
\end{abstract}


\section{Introduction}

Two-player games on finite graphs which admit infinite plays are of fundamental importance in many areas of logic and computer science, especially in the 
formal analysis of reactive systems, where they model the non-terminating interaction between a system and its environment.
In such a game, the \emph{objective} or \emph{winning condition} of the player who represents the system specifies
the desired set of behaviours of the system. The most basic classes of such
objectives are \emph{reachability} and \emph{safety} objectives defined by a set of states (positions) that the player should reach, or avoid.
We can assume, without loss of generality, that even though infinite plays are possible in a game with reachability or safety objectives, 
they are all won by the same player.

Games with genuine and non-trivial winning conditions for infinite plays are harder to analyse; they include games with arbitrary
$\omega$-regular objectives, such as liveness, Muller, Streett-Rabin, or parity objectives, and many others. 
The goal of this paper is to provide a case study of a recent method for strategy analysis, based on semiring semantics,
and we would like to explore its potential for providing detailed information about strategies in genuinely infinite games.  
One of the simplest class of games with a non-trivial winning condition for the infinite plays are games with 
Büchi objectives, which require that a specific target set $F$  of states is reached infinitely often during the play (see e.g. \cite{GraedelThoWil02} for background). 
Büchi games, as well as some of their straightforward generalisations, have many applications in formal methods,
and efficient algorithms for solving them have been studied thoroughly (see e.g. \cite{ChatterjeeDvoHenLoi16,ChatterjeeHen14,ChatterjeeHenPit08}).
They are also of interest from the points of view of topology and logic, because they are among the simplest games where the set of winning plays is neither open nor closed, and where logical definition of the winning region requires a genuine alternation of a greatest and a least fixed point (see Sect.~\ref{sec:TheFormula}).
 
Strategies in infinite games can be very complicated because, in principle, they may depend on the entire history of a play.
Thus, there exist uncountably many different strategies, even on a finite game graph.
Fortunately, in many cases and in particular for Büchi games, simple strategies are sufficient to win.
A fundamental result in this context is the positional determinacy of parity games (of which Büchi games are a special case),
saying that from each position, one of the two players has a \emph{positional winning strategy}, i.e.
a winning strategy that only depends on the current position and not on the history of the play.   
A positional strategy can be viewed as a subgraph of the game graph, and can therefore be represented in
a compact way. As a consequence, the algorithmic analysis of Büchi games has concentrated almost exclusively on the positional strategies. Here we extend this point of view somewhat and take also other kinds of simple strategies into account.  
Specifically, we are interested in \emph{absorption-dominant} winning strategies \cite{GraedelTan20} which are strategies without redundant
moves; this means that taking away anything, in the sense of demanding that some specific move is played less often,
makes the strategy non-winning. 
Another way to distinguish positional strategies from absorption-dominant ones concerns their minimisation properties:
while positional strategies minimize the \emph{set} of moves that they use, absorption-dominant strategies
take multiplicities into account and minimize the \emph{multiset} of moves.  
A further interesting class are the \emph{persistent} strategies \cite{MarcinkowskiTruderung02}, which are positional in each individual play
but not necessarily across distinct plays.
We shall study the relationship between these different classes of simple strategies, and prove that
every positional strategy is absorption-dominant and every absorption-dominant strategy is persistent,
and that these inclusions are strict.

The specific method for strategy analysis that we want to apply to Büchi games in this paper is
based on the logical definability of the winning positions by 
a formula in the fixed-point logic LFP, and on the semiring semantics for LFP
developed in \cite{DannertGraNaaTan21}.
In the classical Boolean semantics, a model $\AA$ of a formula $\phi$
assigns to each (instantiated) literal a Boolean value.
$\K$-interpretations $\pi$, for a suitable semiring $\K$, generalize this
by assigning to each such literal a semiring value from $\K$.
We then interpret $0$ as \emph{false} and all other semiring values as \emph{nuances of true}
that provide additional information, depending on the semiring:
For example, the Boolean semiring $\Bool = (\{0,1\}, \lor, \land, 0, 1)$ corresponds to Boolean semantics, the Viterbi-semiring $\Vit = ([0,1], \max, \cdot, 0, 1)$ can model \emph{confidence} scores, 
the tropical semiring $\Trop= (\mathbb{R}_{+}^{\infty},\min,+,\infty,0)$
is used for cost analysis, and min-max-semirings $(A, \max, \min, a, b)$ for a totally ordered set $(A,<)$ can model different access levels.
Most importantly, semirings of polynomials, such as $\N[X]$, allow us to \emph{track} certain literals by mapping them to different indeterminates. The overall value of the formula is then a polynomial that describes precisely what combinations of literals prove the truth of the formula.
Semiring semantics has been studied for various logics \cite{BourgauxOzaPenPre20, DannertGra19, DannertGra20,DannertGraNaaTan21,GraedelTan17},
following the successful development of semiring provenance in database theory and related fields (see e.g.\cite{DeutchMilRoyTan14,GeertsPog10,GreenKarTan07,GreenTan17,OzakiPen18,RaghothamanMenZhaNaiSch20,Senellart17}).
While semiring provenance analysis for database queries had largely been  confined
to positive query languages such as conjunctive queries, positive relational algebra, and Datalog,
the generalisation to logics such as first-order logic FO and least fixed-point logic LFP -- featuring full negation and unrestricted interaction between least and greatest fixed points -- poses non-trivial mathematical  challenges and requires  
new algebraic constructions. Specifically, it has turned out that appropriate semirings for LFP should be
absorptive and fully continuous. Fortunately, this is the case for most of the important application semirings such
as $\Vit, \Trop$ or min-max-semirings, but not for the natural semiring $\N$, or the general provenance semirings
of polynomials or formal power series, $\N[X]$ and $\N^\infty \ps X$. 
Instead, we rely on semirings $\Sinf[X]$ of \emph{generalized absorptive polynomials},
which we explain in Sect.~\ref{sect:semirings}, and which are
the \emph{universal} absorptive, fully-continuous semirings, in the sense that 
every mapping $h \colon X \to \K$ into an absorptive, fully-continuous semiring $\K$
uniquely extends to a fully-continuous semiring homomorphism $h \colon \Sinf[X] \to \K$,
see Theorem~\ref{universality}.  
From valuations of fixed-point formulae in such semirings we thus can 
derive detailed insights into why the formula holds -- and by applying this to
the fixed-point definition of winning positions in Büchi games we obtain compact descriptions of 
winning strategies, in particular of all positional strategies and all absorption-dominant ones.

After an analysis of simple winning strategies in Büchi games, and a short introduction to semiring semantics for fixed-point logic, we shall study the semiring valuations of the particular LFP-formula $\FormulaWin(x)$
that defines the winning region for Player~0 in Büchi games. Given that the objective of Player~0 is to ensure 
that the play hits the target set $F$
infinitely often, we may informally describe their winning region  as the \emph{largest} set
$Y$ of positions from which they can enforce a (further) visit to $Y\cap F$
after $k\geq 1$ moves.  On the other side the set of positions from which
Player~0 can enforce a visit to a target set is the \emph{smallest} set 
of positions that either are already in the target set, or from which Player~0 can 
enforce the play to come closer to it. Thus, the winning region of Player~0 
can be described as a greatest fixed point inside of which there is a least fixed point, and it is well-known that 
this fixed-point alternation in the treatment of Büchi objectives cannot be avoided, see e.g. \cite{BradfieldWal18}.

The theoretical underpinning of our method is a Sum-of-Strategies-Theorem, saying that for any position $v$ in a Büchi game, the
valuation of the LFP-formula $\FormulaWin(v)$ in an absorptive, fully-continuous semiring
coincides with the sum of the valuations of all absorption-dominant winning strategies from $v$.
The proof is somewhat involved and requires non-trivial machinery; details are given in the full version of this paper
\cite{GraedelLucNaa21}.
Besides being of theoretical interest, this result allows to study a number of interesting questions
concerning the available winning strategies in a given Büchi game:

\smallskip\noindent\textbf{Strategy tracking.}
Introducing indeterminates for all edges in a fixed Büchi game $\Gg$,
the semiring value $\pitrack \ext {\FormulaWin(v)}$ for a position $v$
is a polynomial whose monomials are concise descriptions of all absorption-dominant strategies.
From these monomials we can derive whether Player 0 wins from $v$ (if there are any monomials)
and which edges are used by each absorption-dominant strategy, and how often they appear in the strategy tree.
In particular, we can immediately identify and count positional strategies from the polynomial.
Going further, we can answer questions such as: can Player~0 still win if we remove edge $e$, or several edges at once?
Can they still win if edge $e$ may only be used finitely often in each play?

\smallskip\noindent\textbf{Repairing a game.} Instead of analysing strategies in a fixed game, we may also reason about modifications or synthesis of (parts of) the game.
For example, assuming Player~0 loses from $v$, what are minimal modifications to the game that would let Player~0 win from $v$?
To answer such questions we have to take into account also negative information (i.e., absent edges in the graph),
so as to find a minimal repair consisting of both moves to delete and moves to add.
Algebraically, this requires to extend our semirings by dual-indeterminates, which leads to quotient semirings
$\Sinf[X,\bar X]$ by a construction that has been used before in \cite{GraedelTan17,XuZhaAlaTan18,DannertGraNaaTan21} to deal with
semiring semantics for negation.
We illustrate with the example of minimal repairs that we can indeed derive the desired information from valuations in such semirings.

\section{Büchi Games and Strategies}

A Büchi game is given by a tuple $\Gg = (V, V_0, V_1, E, F)$ where
$V$ is a set of positions (here assumed to be finite), with a disjoint decomposition
$V=V_0 \dcup V_1$ into positions of Player~0 and positions of Player~1.
The relation $E\subseteq V\times V$ specifies the possible moves, and the target set
$F\subseteq V$ describes the winning condition.
We denote the set of immediate successors of a position $v$ by 
$vE:=\{w \mid vw\in E\}$ and require that $vE\neq\emptyset$ for all $v$.
A play from an initial position $v_0$ is an infinite path $v_0v_1v_2\dots$
through $\Gg$ where the successor $v_{i+1}\in v_iE$ is chosen by Player~0
if $v_i\in V_0$ and by Player~1 if $v_1\in V_1$.
A play $v_0v_1v_2\dots$ is won by Player~0 if $v_i\in F$ for infinitely many $i<\omega$,
otherwise it is won by Player~1. 
The winning region of Player~$\sigma$ is the set
of those positions $v\in V$ such that Player~$\sigma$ has a winning strategy
from $v$, i.e. a strategy that guarantees them a win, no matter what the opponent does.

A strategy for Player~$\sigma$ in $\Gg = (V, V_0, V_1, E, F)$ can be represented in different ways,
for instance as a function $f \co V^* V_\sigma \to V$ that assigns a next position to each partial play ending in a position of Player $\sigma$,
or simply $f \co V_\sigma \to V$ if the strategy is positional.
Here we follow an alternative approach and represent strategies as trees,
comprised of all plays that are consistent with the strategy (see, e.g., \cite{GraedelTan20}).
For simplicity, we only consider strategies of Player~0, so unless mentioned otherwise, ``strategy'' always refers to a strategy for Player~0.

\begin{definition}
Given a Büchi game $\Gg = (V, V_0, V_1, E, F)$, the \emph{tree unraveling} from $v_0$ is the tree $\Tt(\Gg, v_0)$ whose nodes are all finite paths $\rho$ from $v_0$ in $\Gg$ and whose edges are $\rho \to \rho v$ for $v \in V$.
We often write a node of $\Tt(\Gg, v_0)$ as $\rho v$ to indicate a finite path ending in $v \in V$. The length of $\rho$ is denoted by $|\rho|$ and we write $\rho \prefix \rho'$ if $\rho$ is a (not necessarily strict) prefix of $\rho'$.
\end{definition}

Strategies can then be defined as subtrees of the tree unraveling, which allows for a more visual way to reason about strategies.
An important detail is that the strategy tree only contains positions (and thus choices for these positions) that 
are reachable when following the strategy.
Moreover, we only consider finite Büchi games and hence the tree unraveling and all strategies are finitely branching. 

\begin{definition}
\label{def:strategyAsTree}
A \emph{strategy} $\Ss$ (of Player~0) from $v_0$ in $\Gg$ is a subtree of $\Tt(\Gg, v_0)$ induced by a node set $W$ satisfying the following conditions:\\
-\quad If $\rho v \in W$, then also $\rho \in W$ (prefix closure).\\
-\quad If $\rho v \in W$ and $v \in V_0$, then there is a unique $v' \in vE$ with $\rho v v' \in W$ (unique choice).\\
-\quad If $\rho v \in W$ and $v \in V_1$, then $\rho v v' \in W$ for all $v' \in vE$ (all moves of the opponent).\\
\noindent The strategy is winning if all plays contained in $\Ss$ are winning.

We commonly write $\rho \in \Ss$ instead of $\rho \in W$,
and we often refer to paths of the form $\rho v \in \Ss$ as \emph{occurrences of $v$} in $\Ss$.
When we depict strategies graphically, we represent finite paths $\rho v$ just by their last position $v$ to ease readability (notice that in the tree unravelling 
$\rho$ can be reconstructed from $v$ by following the path to the root).
See \Cref{fig:RunningStrategy} for an example.
For $v \in V_0$, we further write $\Ss(\rho v) = w$ if $\rho v w$ is the (unique) successor of $\rho v$ in $\Ss$.
If $\Ss$ is positional, we may also write $\Ss(v)$ to denote the unique successor of $v$ chosen by $\Ss$.
We write $\Strat_\Gg(v)$ and $\WinStrat_\Gg(v)$ to denote the set of all (winning) strategies of Player~0 from position $v \in \Gg$, and we drop $\Gg$ if the game is clear from the context.
\end{definition}

\begin{figure}[b]
\centering
\newcommand{\enode}[4][]{\draw [arr] (#2) edge node [#1] {$#4$} (#3);}
\subfloat[Rectangular nodes belong to Player~1, round nodes to Player~0, dashed nodes are in $F$.]{
\centering
\begin{tikzpicture}[node distance=1.5cm,framed,baseline]
\node [p1,label={left:$v$}] (0) {};
\node [p1, right of=0, yshift=.7cm] (1) {};
\node [p0, F, right of=1, yshift=-.7cm, label={below:$v'$}] (2) {};
\node [p0, right of=2, yshift=.7cm] (3) {};
\node [p1, F, right of=2, yshift=-.7cm, label={below:$u$}] (4) {};
\node [p0, F, below of=2,label={above:$w$}] (5) {};
\node [p1, left of=5] (6) {};
\draw [arr]
    (0) edge node {$a$} (1) (0) edge node {$c$} (2) (1) edge node {$b$} (2)
    (2) edge [bend left=20pt] node {$d$} (0)
    (2) edge node {$e$} (3)
    (3) edge [loop right] node {$i$} (3)
    (2) edge node [below left] {$f$} (4)
    (3) edge node {$h$} (4)
    (4) edge [loop right] node {$g$} (4)
    (4) edge node {$k$} (5)
    (5) edge [loop below] node {\strut$m$} (5)
    (5) edge node {$n$} (6)
    (6) edge [loop below] node {\strut$p$} (6)
    (6) edge [bend left] node {$q$} (0)
    ;
\end{tikzpicture}
\label{fig:RunningGame}
}
\hfill
\subfloat[Depiction of an infinite strategy tree of a winning strategy for Player~0 from position $v$.]{
\centering
\begin{tikzpicture}[node distance=1.3cm,every node/.style={scale=0.7},framed,baseline]
\node [p1,label={left:$v$}] (0) {};
\node [p1,right of=0,yshift=.8cm] (1) {};
\node [p0,F,right of=1,label={below:$v'$}] (2) {};
\node [p0,right of=2] (3) {};
\node [p1,F,right of=3] (4) {};
\node [marker,right of=4] (4a) {};
\node [marker,below of=4,xshift=.5cm] (4b) {};
\node [p0,F,right of=0,yshift=-.8cm,label={below:$v'$}] (a) {};
\node [p1,F,right of=a] (b) {};
\node [p1,F,below of=b,xshift=.5cm] (c) {};
\node [p1,F,below of=c,xshift=.5cm] (d) {};
\node [marker,right of=d] (d1) {};
\node [marker,below of=d,xshift=.5cm] (d2) {};
\node [p0,F,right of=b] (b1) {};
\node [p0,F,right of=b1] (b2) {};
\node [p0,F,right of=b2] (b3) {};
\node [marker,right of=b3] (b4) {};
\node [p0,F,right of=c] (c1) {};
\node [p0,F,right of=c1] (c2) {};
\node [p0,F,right of=c2] (c3) {};
\node [marker,right of=c3] (c4) {};
\enode 0 1 a
\enode 1 2 b
\enode 2 3 e
\enode 3 4 h
\enode[swap] 0 a c
\enode a b f
\enode[left] b c g
\enode[left] c d g
\enode b {b1} k
\enode {b1} {b2} m
\enode {b2} {b3} m
\enode c {c1} k
\enode {c1} {c2} m
\enode {c2} {c3} m
%
\draw [densely dotted,shorten <=3pt,shorten >=10pt] (4) edge (4a) (4) edge (4b);
\begin{scope}
\draw [densely dotted,shorten <=3pt,shorten >=10pt] (b3) edge (b4);
\clip (c3.north west) rectangle ($(c3.south east)+(.4cm,-.4cm)$);
\draw [densely dotted,shorten <=3pt,shorten >=10pt] (c3) edge (c4);
\end{scope}
\begin{scope}
\clip (d.north west) rectangle ($(d.south east)+(.4cm,-.4cm)$);
\draw [densely dotted,shorten <=3pt,shorten >=10pt] (d) edge (d1) (d) edge (d2);
\end{scope}
\end{tikzpicture}
\label{fig:RunningStrategy}
}
\caption{Running example of a Büchi game and a winning strategy.}
\label{fig:Running}
\end{figure}

\begin{Example}
An example of a Büchi game is depicted in \Cref{fig:Running}.
Player~0 has essentially three different positional winning strategies from $v$,
by either choosing edge $d$, or edges $e,h,m$ or $f,m$.
Notice that for the first strategy, we did not specify moves for all positions in $V_0$ as these positions cannot be reached when edge $d$ is played; this is the main reason why we represent strategies as trees.
\Cref{fig:RunningStrategy} depicts such a tree representation of a strategy.
This strategy is a typical example of a winning strategy that is not positional, but still minimal if we take edge multiplicities into account.
\end{Example}

\section{Strategies with Minimal Effort}

\newcommand{\quoteauthor}{Author, Year}
\newenvironment{aquote}[1]{\renewcommand{\quoteauthor}{#1}\begin{quote}\itshape}{\hfill\textsc{--- \quoteauthor}\end{quote}}

\begin{aquote}{Antoine de Saint-Exupéry}
Perfection is achieved, not when there is nothing more to add, but when there is nothing left to take away
\end{aquote}

\noindent
As a measure for the complexity or effort of a strategy,
we consider the set of edges a strategy $\Ss$ uses and how often each of these edges appears in the strategy tree.
Under this measure, the simplest strategies are the ones that do not play redundant edges
-- hence no moves are left to take away.

\begin{definition}
Given an edge $e = vw \in E$ in a Büchi game $\Gg$ and a strategy $\Ss$ in $\Gg$,
we denote by $\ecount \Ss e = | \{ \rho v \in \Ss \mid \rho v \to \rho v w \text{ is an edge in } \Ss \}| \in \N \cup \{ \infty \}$ the number of times (possibly infinite) the edge $e$ occurs in $\Ss$.
With each strategy $\Ss$ we associate its \emph{edge profile}, the vector $\ep \Ss = (\ecount \Ss e)_{e \in E}$.
\end{definition}

\begin{Example}\label{ex:redundantMove}
Consider the following Büchi game:
\begin{center}
\begin{tikzpicture}[baseline]
\node [p0,label={below:$v$},anchor=base,yshift=.1cm] (0) {};
\node [p0,F,label={below:$w$},right of=0] (1) {};
\draw [arr]
    (0) edge [loop left] node {$a$} (0)
    (1) edge [loop right] node {$c$} (1)
    (0) edge node {$b$} (1);
\end{tikzpicture}
\end{center}

Player~0 wins by first looping $n$ times at position $v$ (for any fixed $n \in \N$) and then moving to $w$, corresponding to the edge profile $(n,1,\infty)$.
Clearly, looping at $v$ is a redundant move, so we consider the strategy with $n=0$ as the simplest strategy (that wins with the least effort).
\end{Example}

To formalize the intuition of redundant moves, we define an order $\absorb$ on strategies called \textit{absorption}.
This is defined in such a way that the $\absorb$-maximal strategies are the simplest ones
that avoid redundant moves whenever possible.

\begin{definition}
Let $\Ss_1,\Ss_2$ be two strategies in a Büchi game $\Gg = (V,V_0,V_1,E,F)$.
We say that $\Ss_1$ \emph{absorbs} $\Ss_2$, denoted $\Ss_1 \absorb \Ss_2$,
if $\ecount {\Ss_1} e \le \ecount {\Ss_2} e$ for all edges $e \in E$.
If additionally $\ecount {\Ss_1} e < \ecount {\Ss_2} e$ for some $e \in E$,
we say that $\Ss_1$ \emph{strictly absorbs} $\Ss_2$, denoted $\Ss_1 \absorbneq \Ss_2$.
They are \emph{absorption-equivalent}, denoted $\Ss_1 \equiv \Ss_2$, if both $\Ss_1 \absorb \Ss_2$ and $\Ss_2 \absorb \Ss_1$.
A strategy $\Ss \in \Strat(v)$ is \emph{absorption-dominant from position $v$}, if there is no strategy $\Ss' \in \Strat(v)$ with $\Ss' \absorbneq \Ss$.
It is further \emph{strictly} absorption-dominant, if there is no other strategy $\Ss' \in \Strat(v)$ with $\Ss' \absorb \Ss$, so no other strategy is absorption-equivalent to $\Ss$.
\end{definition}

Notice that absorption is simply the inverse pointwise order on the edge profiles.
In particular, $\Ss_1 \equiv \Ss_2$ if, and only if, $\ep {\Ss_1} = \ep {\Ss_2}$.
We next aim at understanding the relation between
(strictly) absorption-dominant strategies and the standard notion of positional strategies.
As a starter, we show that absorption-dominant strategies are not necessarily positional
(cf.\ \cite{GraedelTan20} for a similar example).

\begin{Example}\label{ex:Weakpos}
Consider the strategy $\Ss$ as depicted in \Cref{fig:RunningStrategy}.
It is not positional, as the choice for position $v'$ is not unique (both $e$ and $f$ occur in $\Ss$).
It is, however, absorption-dominant.
As there are two paths to $v'$, every strategy must either use $e$ or $f$ twice, or use both edges.
If $e$ (or $f$) is used twice, then the strategy cannot absorb $\Ss$, and one can verify that $\Ss$ absorbs all strategies using both $e$ and $f$.

It is not strictly absorption-dominant, as we obtain an absorption-equivalent strategy by switching the two branches in the depiction of $\Ss$, so that $e$ is used after $c$, and $f$ after $b$.
\end{Example}

Strategies such as the one in \Cref{fig:RunningStrategy} are not positional, but satisfy the weaker property that within each \emph{play}, the strategy makes a unique decision for each position $v \in V_0$.
This notion of strategies has been introduced as \emph{persistent} strategies in \cite{MarcinkowskiTruderung02} in the context of LTL on game graphs and has been further studied in \cite{Duparc03}.
Persistent strategies have also been called \emph{weakly positional} in \cite{GraedelTan20}.

We say that a strategy \emph{plays positionally} from a position $v \in V_0$ if the strategy makes a unique choice at position $v$.
A strategy that plays positionally from all positions in $V_0$ is positional.
With this notation, we now clarify the relation between the different notions of strategies; a summary is shown in \Cref{figStrategyClasses}.
We first observe that if a strategy $\Ss$ does not play positionally from $v$, we can always obtain a strategy $\Ss'$ with $\Ss' \absorb \Ss$ by swapping the choices at $v$. This leads to the first result (see \cite{GraedelLucNaa21} for details):

\begin{figure}[t]
\centering
\begin{tikzpicture}[font=\small, every label/.style={font=\scriptsize}, xscale=0.8,yscale=0.9]
\draw [pattern=crosshatch dots, pattern color=lightgray!30]  (0,.4) ellipse (4cm and 1.2cm);
\node[anchor=north] at (0,1.4) {absorption-dominant};
\draw [pattern=north east lines, pattern color=lightgray!60] (0,0) ellipse (2.6cm and .8cm) node [align=center,yshift=0pt] {positional \\[-.3em] = \\[-.2em] strictly abs.-dom.} ;
\draw (0,.8) ellipse (6cm and 1.6cm);
\node[anchor=north] at (0,2.3) {persistent};
\draw (-8,-.8) rectangle (8,2.7);
\node [anchor=north west] at (-8,2.7) {Winning strategies};
\node [dot,label={right:Ex.~\ref{ex:Weakpos}}] at (2.3,0.7) {};
\node [dot,label={right:Ex.~\ref{ex:WeakposNotDominant}}] at (4.2,1.0) {};
\end{tikzpicture}
\vspace{-.3em}
\caption{Venn diagram depicting the relation of various classes of winning strategies.}
\label{figStrategyClasses}
\end{figure}
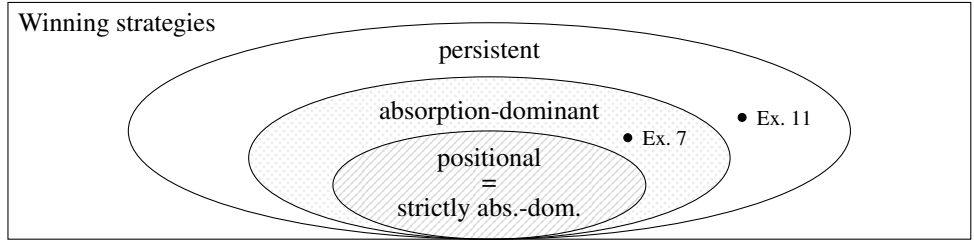

\begin{proposition}
Strictly absorption-dominant strategies coincide with positional strategies.
\end{proposition}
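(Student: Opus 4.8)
The plan is to prove the two inclusions separately, using throughout that $\Ss_1 \absorb \Ss_2$ holds exactly when $\ep{\Ss_1} \le \ep{\Ss_2}$ pointwise, so that $\Ss$ is strictly absorption-dominant precisely when no strategy $\Ss' \neq \Ss$ satisfies $\ecount{\Ss'}{e} \le \ecount{\Ss}{e}$ for all $e \in E$.

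For the inclusion \emph{positional $\Rightarrow$ strictly absorption-dominant}, let $\Ss$ be positional with positional choice $\sigma$, and suppose $\Ss' \absorb \Ss$. The key point is that $\Ss'$ can only use edges that $\Ss$ already uses: if $\ecount{\Ss}{e} = 0$, then $\ecount{\Ss'}{e} \le \ecount{\Ss}{e} = 0$ forces $\ecount{\Ss'}{e} = 0$. Writing $E_\Ss = \{ e \in E : \ecount{\Ss}{e} > 0 \}$ for the support of $\Ss$, I observe that since $\Ss$ is positional, each reachable $v \in V_0$ has a single outgoing edge $v\,\sigma(v)$ in $E_\Ss$; hence at every occurrence of such a $v$ the strategy $\Ss'$ is forced to choose $\sigma(v)$ as well. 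At positions $v \in V_1$ both $\Ss$ and $\Ss'$ must branch into all successors $vE$. A straightforward induction on the depth in the tree unraveling $\Tt(\Gg, v_0)$ then shows that $\Ss$ and $\Ss'$ contain exactly the same nodes, i.e.\ $\Ss' = \Ss$. Consequently no distinct strategy absorbs $\Ss$, so $\Ss$ is strictly absorption-dominant.

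For the converse I argue by contraposition: if $\Ss$ is not positional, I exhibit a distinct but absorption-equivalent strategy. Non-positionality means that some $v \in V_0$ has two occurrences $\rho_1 v, \rho_2 v \in \Ss$ with distinct successors $\Ss(\rho_1 v) \neq \Ss(\rho_2 v)$. Let $\Ss'$ be obtained from $\Ss$ by exchanging the two subtrees rooted at $\rho_1 v$ and $\rho_2 v$. Since both subtrees are legal continuations of the same position $v$, the result is again a valid strategy; it differs from $\Ss$ because the edge taken at $\rho_1 v$ changes; and it has the same edge profile, as swapping merely relocates occurrences without creating or destroying any edge, so $\ep{\Ss'} = \ep{\Ss}$ and thus $\Ss' \absorb \Ss$ with $\Ss' \neq \Ss$. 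Hence $\Ss$ is not strictly absorption-dominant. This is exactly the swapping observation stated just before the proposition.

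I expect the main obstacle to lie in the first inclusion, in making the rigidity argument fully precise: one has to rule out that a competitor $\Ss'$ with a pointwise-smaller profile reroutes through a different part of the game graph. The support argument together with the out-degree-one property of positional strategies is what pins $\Ss'$ down, but care is needed with positions that $\Ss$ never reaches (where positionality holds vacuously) and with edges of infinite multiplicity, for which the inequality $\ecount{\Ss'}{e} \le \ecount{\Ss}{e}$ carries no information — the forcing must therefore rely solely on the zero-count edges, as above.
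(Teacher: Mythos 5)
Your overall route matches the paper's: the backward inclusion is exactly the ``swap the choices'' observation stated just before the proposition, and the forward inclusion is the rigidity argument via supports. That forward direction (positional $\Rightarrow$ strictly absorption-dominant) is correct and complete as you give it: the zero-count edges pin down the support of any competitor $\Ss' \absorb \Ss$, and the induction on depth in $\Tt(\Gg,v_0)$ then forces $\Ss' = \Ss$; your handling of unreached positions and of infinite multiplicities is exactly right.

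The gap is in the converse. You pick two occurrences $\rho_1 v, \rho_2 v$ with distinct successors and ``exchange the two subtrees rooted at $\rho_1 v$ and $\rho_2 v$''. This operation is only well-defined when the two occurrences are $\prefix$-incomparable, and non-positionality does not guarantee that such a pair exists: if, say, $v$ occurs exactly twice in $\Ss$, once at the root and once further down the single branch that returns to $v$, then $\rho_1 v$ is a proper prefix of $\rho_2 v$, the second subtree is contained in the first, and after replacing the subtree at $\rho_1 v$ the node $\rho_2 v$ need no longer exist, so there is nothing to exchange it with. The repair is easy but changes the conclusion you draw: in the nested case, define $\Ss'$ by grafting the subtree rooted at $\rho_2 v$ onto $\rho_1 v$ (i.e., $\rho_1 v \mu \in \Ss'$ iff $\rho_2 v \mu \in \Ss$, keeping $\Ss$ unchanged above and outside $\rho_1 v$). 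Every edge occurrence of $\Ss'$ below $\rho_1 v$ then corresponds injectively to an occurrence of $\Ss$ below $\rho_2 v$, hence $\ecount{\Ss'}{e} \le \ecount{\Ss}{e}$ for all $e$. Note that here the edge profile may strictly decrease, so your claim $\ep{\Ss'} = \ep{\Ss}$ fails in this case; but all that is needed to refute strict absorption-dominance is some $\Ss' \neq \Ss$ with $\Ss' \absorb \Ss$, which both constructions provide. (If one additionally wants the competitor to be winning, both the swap and the graft preserve winningness, since the Büchi condition is a tail condition.)
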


Towards persistent strategies, we observe, by applying a simple combinatorial fact known as Dickson's Lemma to edge profiles, that there are only finitely many absorption-dominant strategies from $v$ up to absorption-equivalence.
This leads to the following result:

\begin{proposition}\label{stratInfinitePositional}
Let $\Ss \in \WinStrat_\Gg(v)$ be absorption-dominant from $v$, and let $w \in V_0$ be a position.
If $w$ occurs infinitely often in $\Ss$, then $\Ss$ plays positionally from $w$.
\end{proposition}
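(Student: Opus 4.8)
The plan is to argue by contradiction: I assume that $\Ss$ is winning and absorption-dominant from $v$, that $w$ occurs infinitely often in $\Ss$, yet that $\Ss$ does not play positionally from $w$, and I aim to construct a winning strategy $\Ss'$ from $v$ with $\Ss' \absorbneq \Ss$, contradicting absorption-dominance. The basic objects will be the subtrees $\Ss_{\rho w}$ obtained by re-rooting $\Ss$ at an occurrence $\rho w$ of $w$; each $\Ss_{\rho w}$ is itself a winning strategy from $w$, and I record its edge profile $\ep{\Ss_{\rho w}} \in (\N \cup \{\infty\})^E$. The only two surgical operations I need are: (i) replacing a single subtree $\Ss_{\sigma w}$ by a copy of another subtree $\Ss_{\rho w}$, which changes the global profile by $-\ep{\Ss_{\sigma w}} + \ep{\Ss_{\rho w}}$ at that occurrence and leaves everything else untouched; and (ii) a global redirection that commits every occurrence of $w$ to one fixed successor.

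First I would invoke Dickson's Lemma. Since the pointwise order on $(\N \cup \{\infty\})^E$ is a well-quasi-order, the infinitely many profiles $\{\ep{\Ss_{\rho w}}\}_{\rho w}$ cannot form an infinite antichain, so there are two distinct occurrences with $\ep{\Ss_{\rho w}} \absorb \ep{\Ss_{\sigma w}}$. Operation~(i) then produces $\Ss'$ with $\Ss' \absorb \Ss$, and the whole difficulty becomes forcing a \emph{strict} decrease. Writing $I = \{ e : \ecount \Ss e = \infty \}$ for the edges used infinitely often and $\bar I$ for the rest, I note that operation~(i) can only ever strictly decrease a coordinate in $\bar I$ (for $e \in \bar I$ the number of uses of $e$ outside the replaced subtree is automatically finite), whereas coordinates in $I$ remain at $\infty$. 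Hence if some comparable pair of occurrence-profiles is already strict on a $\bar I$-edge, I am done by a single replacement.

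The genuinely hard case — and the main obstacle — is when all occurrence-profiles agree on $\bar I$, so that the ``waste'' caused by the non-positional play at $w$ sits entirely on edges of $I$. This is exactly where the hypothesis that $w$ occurs \emph{infinitely often} is indispensable: a single local exchange is useless, as the simple pattern where $w$ has two successors each returning to $w$ through $F$ shows (there every re-rooted subtree can have profile $(\infty,\dots,\infty)$, so no replacement is strict). Instead one must perform the global redirection~(ii): commit every occurrence of $w$ to a single successor $u^* \in wE$ that by itself sustains a winning continuation, thereby dropping the count of every other used successor-edge $w u'$ from $\infty$ to its now finite residual, and in particular strictly below its former value. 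The delicate points, which require the detailed machinery deferred to the full version, are to choose $u^*$ and the grafted continuations so that (a)~the redirected strategy is still winning from $v$, and (b)~no profile coordinate increases — in particular that committing to $u^*$ does not inflate any finitely-used ($\bar I$) edge.

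Finally I would again lean on Dickson's Lemma, now in the form that there are only finitely many absorption-dominant profiles from $w$ up to equivalence. This bounds the candidate continuations among which $u^*$ and the grafts are chosen and guarantees that the repeated improvement process terminates, so that some redirection must deliver $\Ss' \absorbneq \Ss$ — the desired contradiction, which shows that $\Ss$ plays positionally from $w$.
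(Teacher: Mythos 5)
You set up the right objects (the substrategies $\Ss_{\rho w}$ at occurrences of $w$ and their edge profiles), you correctly dispose of the easy case (a comparable pair of profiles that is strict on a finitely-used edge, fixed by one local graft), and you correctly identify that the real work is a global redirection of $w$ to a single successor when all the slack sits on edges of infinite multiplicity. But at exactly that point the proof stops: the existence of a winning continuation from $w$ that uses a \emph{single} outgoing edge of $w$ and only edges already occurring infinitely often in $\Ss$ is essentially what the proposition asserts, and you defer it to ``detailed machinery'' rather than proving it. Your closing appeal to the termination of an improvement process does not fill the hole: termination would show that \emph{some} absorption-dominant strategy exists, not that the given $\Ss$ (already assumed dominant) must play positionally at $w$; to reach the contradiction you must actually exhibit $\Ss' \absorbneq \Ss$.

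The paper closes this gap with two short steps you do not take. First, Dickson's Lemma is applied not merely to find one comparable pair but to extract a single substrategy $\Ss_w$ to which \emph{infinitely many} of the $\Ss_{\rho w}$ are absorption-equivalent; since these occurrences sit at unbounded depths in the tree, every edge used by $\Ss_w$ occurs infinitely often in $\Ss$. Second --- and this is the ingredient entirely absent from your proposal --- \emph{positional determinacy} of B\"uchi games, applied to the subgame in which Player~0 may only use edges of $\Ss_w$, yields a positional winning strategy $\Ss_{\text{pos}}$ from $w$ supported on those same infinitely-used edges. Grafting $\Ss_{\text{pos}}$ at every occurrence of $w$ cannot increase any coordinate of the edge profile (the edges of $\Ss_{\text{pos}}$ are already at $\infty$ in $\Ss$), keeps the strategy winning, and strictly drops the count of the second outgoing edge of $w$ that the non-positional $\Ss$ was using, giving $\Ss' \absorbneq \Ss$. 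Without positional determinacy (or an equivalent argument producing a single-successor winning continuation inside the infinitely-used edges), your redirection step has no candidate $u^*$ and the argument does not go through.
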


\begin{proof}[Proof sketch]
Consider the infinitely many substrategies at occurrences of $w$ in $\Ss$.
By Dickson's Lemma, there is one such substrategy $\Ss_w$ such that infinitely many of the substrategies are absorption-equivalent to $\Ss_w$.
In particular, $\Ss_w$ only uses edges that occur infinitely often in $\Ss$.
By positional determinacy, there is thus a positional winning strategy $\Ss_{\text{pos}}$ from $w$ with the same property.
If $\Ss$ would not be positional, then modifying $\Ss$ to always play $\Ss_{\text{pos}}$ from $w$ would result in a strategy $\Ss' \absorbneq \Ss$, a contradiction.
\end{proof}

With this important insight, we can deduce that the absorption-dominant winning strategies (from some position $v$) are a (strict) subset of the persistent strategies:
An absorption-dominant strategy must play positionally from positions that occur infinitely often by \Cref{stratInfinitePositional}; positions that occur finitely often cannot be visited twice in one play (this would be a redundant repetition).

\begin{corollary}\label{stratDominantWeakpos}
Every absorption-dominant winning strategy in $\Gg$ is persistent.
\end{corollary}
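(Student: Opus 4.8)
The plan is to unfold the definition of persistence and reduce it, via a case distinction on how often a position is visited, to the two ingredients already at hand: \Cref{stratInfinitePositional} for positions occurring infinitely often, and an explicit ``loop-cutting'' modification for positions occurring only finitely often. Recall that $\Ss$ is persistent exactly when, along every single play of $\Ss$, each position $w \in V_0$ is always continued by the same successor; so for a fixed absorption-dominant $\Ss \in \WinStrat_\Gg(v)$ it suffices to show that no play contains two differently-continued occurrences of a position $w \in V_0$. Fix a play $\pi = v_0 v_1 v_2 \dots$ in $\Ss$ and suppose $v_i = v_j = w \in V_0$ with $i < j$.

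First I would treat the case where $w$ occurs infinitely often in $\Ss$: here \Cref{stratInfinitePositional} applies directly and tells us that $\Ss$ plays positionally from $w$, i.e.\ it makes one global choice at $w$, so the continuations at steps $i$ and $j$ agree and $\pi$ witnesses no violation. The interesting case is when $w$ occurs only finitely often in $\Ss$; here I claim $w$ cannot be visited twice along $\pi$ at all. Writing the two occurrences as tree nodes $\rho w \prefix \rho' w$ of $\Tt(\Gg, v_0)$ (the first a prefix of the second, since they lie on the common play $\pi$), consider the strategy $\Ss'$ obtained from $\Ss$ by deleting the subtree rooted at $\rho w$ and grafting in its place the subtree rooted at $\rho' w$ (relabelling the common prefix). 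Since both subtrees are rooted at an occurrence of the same position $w$, this yields a well-formed element of $\Strat(v)$, and because the grafted subtree is a proper subtree of the deleted one we immediately get $\ecount{\Ss'}{e} \le \ecount{\Ss}{e}$ for every edge $e$, hence $\Ss' \absorb \Ss$.

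It remains to turn this into a \emph{strict} absorption, and this is the step I expect to be the crux, since a proper subtree need not use strictly fewer copies of a given edge. The trick is to count occurrences of the position $w$ rather than of a single edge: the subtree rooted at $\rho w$ contains, besides all occurrences of $w$ inside the subtree rooted at $\rho' w$ (which lie strictly below its root), the root occurrence $\rho w$ itself, and hence strictly more occurrences of $w$ than the grafted subtree. Crucially, since $w$ occurs only finitely often in $\Ss$, this is a strict inequality between finite numbers. As every occurrence of a position in $V_0$ contributes exactly one outgoing edge by unique choice, the total multiplicity $\sum_{u \in wE} \ecount{\Ss}{wu}$ of the edges leaving $w$ therefore strictly decreases, so $\ecount{\Ss'}{wu} < \ecount{\Ss}{wu}$ for at least one $u$.

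Thus $\Ss' \absorbneq \Ss$, contradicting absorption-dominance of $\Ss$; note that we do not even need $\Ss'$ to be winning, since absorption-dominance quantifies over all of $\Strat(v)$. Consequently finitely-occurring positions are visited at most once per play, and combining both cases shows that $\Ss$ plays positionally within every play, i.e.\ $\Ss$ is persistent.
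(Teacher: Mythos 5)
Your proof is correct and follows essentially the same route as the paper's argument: a case split between positions occurring infinitely often (handled by \Cref{stratInfinitePositional}) and positions occurring finitely often (handled by noting that a repeated visit within one play would be a redundant repetition, contradicting absorption-dominance). You merely spell out the loop-cutting step in more detail, and your device of counting occurrences of $w$ rather than of a single edge correctly secures the strictness of the absorption.
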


\begin{Example}\label{ex:WeakposNotDominant}
For strictness, consider the following game (a modified part of \Cref{fig:RunningGame}):

\begin{center}
\begin{tikzpicture}[scale=0.9,yscale=0.7,xscale=0.6,baseline={(0,0)}]
\node [p1] at (0,0) (1) {$v$};
\node [p1] at (1.5,1) (2a) {};
\node [p0,F] at (3,0) (3) {};
\node [p0,F] at (4.5,1) (4a) {};
\path [arr] (1) edge node {} (2a) (1) edge node [below left] {} (3)
      (2a) edge node {} (3)
      (3) edge node {$a$} (4a)
      (3) edge [in=-60,out=-20,min distance=1cm] node {$b$} (3)
      (4a) edge [loop right] node {} ()
      ;
\end{tikzpicture}
\hspace{.5cm}
\parbox[c]{3.5cm}{
$\ecount {\Ss_1} {a,b} = (2,0)$, \\
$\ecount {\Ss_2} {a,b} = (0,\infty)$, \\
$\ecount {\Ss_3} {a,b} = (1,\infty)$.
}
\end{center}

\noindent
Due to the self-loop $b$, only the positional strategies $\Ss_1$ (always take $a$) and $\Ss_2$ (always take $b$) are absorption-dominant from $v$.
The strategy $\Ss_3$ that, depending on Player~1's choice, either takes edge $a$ or loops indefinitely using edge $b$ is persistent, but not absorption-dominant:
it is strictly absorbed by $\Ss_2$.
\end{Example}

As a consequence of \Cref{stratDominantWeakpos}, all moves after the first repeated position are determined by persistence and we can thus represent absorption-dominant strategies in a compact way.

\begin{corollary}
Let $\Gg$ be a game with $n = |V|$ positions.
Every winning strategy $\Ss \in \WinStrat_\Gg(v)$ that is absorption-dominant from $v$
can be uniquely represented by a subtree of the tree unraveling of height at most $n$.
In particular, the number of absorption-dominant winning strategies is finite.
\end{corollary}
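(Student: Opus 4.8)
The plan is to exploit persistence (\Cref{stratDominantWeakpos}) together with a pigeonhole argument. First I would fix the canonical representation of an absorption-dominant winning strategy $\Ss \in \WinStrat_\Gg(v)$ as the finite subtree $T_\Ss \subseteq \Tt(\Gg, v)$ obtained by cutting every branch immediately after its first repeated position; formally, $T_\Ss$ keeps exactly those nodes $v_0 v_1 \cdots v_k \in \Ss$ for which $v_0, \dots, v_{k-1}$ are pairwise distinct. Since $v_0,\dots,v_{k-1}$ are then $k$ distinct positions, $k \le |V| = n$, so $T_\Ss$ has height at most $n$, which already yields the height bound. The remaining and principal work is to show that $\Ss \mapsto T_\Ss$ is injective, i.e.\ that $T_\Ss$ determines $\Ss$ completely, which is exactly what ``uniquely represented'' asks for.

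To this end I would analyse the set $R$ of \emph{recurrent} positions, those occurring infinitely often in $\Ss$. Two closure facts drive the argument: by \Cref{stratInfinitePositional} every $w \in R \cap V_0$ is played positionally, so $\Ss$ fixes a single successor $g(w)$; and since each occurrence of a Player~1 position spawns all its successors, $R$ is closed under the moves of $\Ss$ (through $g$ on $V_0$, through all edges on $V_1$). Next I would show that the first repeated position $w = v_m$ along any branch lies in $R$: as $w$ is visited twice in a single play, it cannot occur only finitely often, since finitely occurring positions are never revisited within a play (as noted in the discussion preceding \Cref{stratDominantWeakpos}). By closure of $R$, the entire subtree of $\Ss$ below the first-repeat node is contained in $R$ and is therefore the positional unraveling from $w$ determined solely by $g$.

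It then remains to recover $\Ss$ from $T_\Ss$. For a node $\rho = v_0 \cdots v_k \in \Ss$ with $v_k \in V_0$ I determine $\Ss(\rho)$ by cases. If $v_0, \dots, v_k$ are pairwise distinct, then $\rho$ is an internal node of $T_\Ss$ and its successor is read off directly. Otherwise $v_k$ lies past the first repeat along $\rho$, so $v_k \in R$ and $\Ss(\rho) = g(v_k)$ by positional play. The crucial point is that $g(v_k)$ is already recorded inside $T_\Ss$: the occurrence of $v_k$ of minimal depth in $\Ss$ is reached by a repetition-free path and is thus an internal node $\sigma v_k$ of $T_\Ss$, whose recorded successor equals $g(v_k)$. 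Hence every choice of $\Ss$ is reconstructible from $T_\Ss$, giving injectivity. Finiteness is then immediate: $\Tt(\Gg, v)$ is finitely branching, so its truncation to depth $n$ has finitely many nodes and hence finitely many subtrees, bounding the number of absorption-dominant winning strategies.

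I expect the main obstacle to be certifying that the positional behaviour on $R$ is fully captured within height $n$. The delicate step is the claim that a minimal-depth occurrence of a recurrent position is repetition-free: if such an occurrence contained a repeat, closure of $R$ and positional play would let me shortcut the play inside $\Ss$ and reach $v_k$ at a smaller depth, contradicting minimality. The same closure argument underlies the complementary claim that transient, non-positional positions such as $v'$ in \Cref{ex:Weakpos} never appear below a first repeat, so that no unrecorded non-positional choice is ever needed in the reconstruction.
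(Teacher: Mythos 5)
Your proposal is correct and follows the same route the paper indicates: truncate each branch at its first repeated position (giving height at most $n$) and recover everything below from the positional behaviour on recurrent positions, which \Cref{stratInfinitePositional} together with the closure of the recurrent set under the strategy's moves guarantees. The step you single out as delicate --- that every recurrent position has a repetition-free occurrence, so its positional choice is already recorded inside the truncation --- is precisely the detail needed to turn the paper's one-line justification (``all moves after the first repeated position are determined'') into a proof, and your shortcut argument handles it correctly.
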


\section{A Whirlwind Tour of Semiring Semantics}

This section gives an overview on semiring semantics for fixed-point logics,
with a focus on the semirings relevant for the case study.
For a complete account, we refer to \cite{DannertGraNaaTan21}.

\subsection{Semirings}\label{sect:semirings}

Semirings are algebraic structures with two binary operations, usually denoted $+$ and $\bcdot$,
which we use to interpret the logical connectives $\lor$ and $\land$.
While semirings are very general structures, we make additional assumptions to ensure well-defined and meaningful semiring semantics for logics with fixed-point operators, following the definitions in \cite{DannertGraNaaTan21}.

\begin{definition}
A \emph{commutative semiring} is an algebraic structure 
$(\K,+,\bcdot,0,1)$, with $0\neq1$,  such that $(\K,+,0)$
and $(\K,\bcdot,1)$ are commutative monoids, $\bcdot$
distributes over $+$, and $0\bcdot a=a\bcdot 0=0$.
It is \emph{idempotent} if $a+a = a$ for all $a \in \K$.
\end{definition}

All semirings we consider are commutative, so we omit \emph{commutative} in the following.
Towards fixed-point logic, we compute least and greatest fixed points with respect to the natural order $\le_\K$ (see below) and to ensure that they exist, we require $\le_\K$ to be a complete lattice (in fact, suprema and infima of chains would suffice, but in idempotent semirings this is equivalent).
We additionally impose a natural \emph{continuity} requirement which is crucial to our proofs, but does not seem to be a strong restriction in practice (we are not aware of any natural complete-lattice semirings that are not continuous).
Regarding notation, a \emph{chain} is a totally ordered set $C \subseteq \K$ and we write
$a \circ C = \{ a \circ c \mid c \in C\}$ for $a \in \K$.

\begin{definition}
In an idempotent semiring $(\K,+,\bcdot,0,1)$, the \emph{natural order} $\le_\K$ is the partial order defined by $a \le_\K b \Leftrightarrow a+b=b$.
We say that $\K$ is \emph{fully continuous} if $\le_\K$ is a complete lattice (with supremum $\Sup$ and infimum $\Inf$) and for all non-empty chains $C \subseteq \K$, elements $a \in \K$ and $\circ \in \{+,\bcdot\}$,
\[
    \Sup (a \circ C) = a \circ \Sup C,
    \quad \text{and} \quad
    \Inf (a \circ C) = a \circ \Inf C.
\]
A semiring homomorphism $h \colon \K_1 \to \K_2$ on fully-continuous semirings is \emph{fully continuous} if $h(\Sup C) = \Sup h(C)$ and $h(\Inf C) = \Inf h(C)$ for all non-empty chains $C \subseteq \K_1$.
\end{definition}

By the Knaster-Tarski theorem, every $\le_\K$-monotone function $f \colon \K \to \K$ on a fully-continuous semiring has a least fixed point $\lfp(f)$ and a greatest fixed point $\gfp(f)$ in $\K$,
and this suffices to guarantee well-defined semantics of fixed-point logics.
However, from a provenance perspective we further want this semantics to be meaningful in the sense that the value of a formula provides insights into why the formula holds.
It turns out that this is the case if we additionally require the semiring to be absorptive \cite{DannertGraNaaTan21}.

\begin{definition}
A semiring $\K$ is \emph{absorptive} if $a+ab = a$ for all $a,b \in \K$.
\end{definition}

We remark that absorption is equivalent to $\K$ being \emph{0-closed} or \emph{bounded} \cite{Mohri02},
that is, $1+a = 1$.
If $\K$ is idempotent, then absorption is further equivalent to multiplication being decreasing,
that is, $a \bcdot b \le_\K a,b$.
Clearly, every absorptive semiring is idempotent and thus partially ordered by $\le_\K$, with $1$ as top element.
If we additionally assume full continuity, we can extend any absorptive semiring by an infinitary power operation
$a^\infty = \Inf_{n \in \N} a^n$ with natural properties such as $a \bcdot a^\infty = a^\infty$ and $(a+b)^\infty = a^\infty + b^\infty$.

\begin{Example}
Examples of semirings used in provenance analysis of databases and logics \cite{GreenKarTan07,GreenTan17,GraedelTan20} are:
\begin{itemize}
\item The \emph{Boolean semiring} $\Bool=(\{\Bzero,\Bone\},\vee,\wedge,\Bzero,\Bone)$ is the standard habitat of 
logical truth. It is absorptive and (trivially) fully continuous.
\item $\Nat=(\Nat,+,\cdot,0,1)$ is used for counting evaluation strategies for a logical statement.
It is not absorptive and hence not well suited for fixed-point logics.
\item The \emph{Viterbi} semiring $\Vit=([0,1],\max,\cdot,0,1)$
is used to compute \emph{confidence scores}
for logical statements. It is isomorphic to the \emph{tropical} semiring $\Trop=(\mathbb{R}_{+}^{\infty},\min,+,\infty,0)$ 
which is used for measuring the cost of 
evaluation strategies.
Both are absorptive and fully continuous.

\item The \emph{min-max} semiring $(A, \max, \min, a, b)$ on a totally ordered set $(A,\leq)$
with least and greatest elements $a$ and $b$ can be used to model access privileges. It is absorptive and fully continuous. \cornerhere
\end{itemize}
\end{Example}

From now on, all semirings we consider are commutative, absorptive and fully continuous.
Besides the application semirings listed above, we are particularly interested in universal semirings of polynomials to represent abstract information.
We can then use fully-continuous homomorphisms to specialize the computed information to application semirings as needed, as these homomorphisms preserve fixed points.

The common examples of semirings of polynomials $\N[X]$ and formal power series $\N^\infty \ps X$,
as used for provenance analysis of FO and Datalog in \cite{GreenKarTan07, GraedelTan17}, are not absorptive and hence not well-suited for fixed-point logic.
Instead, we rely on semirings of (generalized\footnote{The definition we use here generalizes the notion of absorptive polynomials in \cite{DeutchMilRoyTan14} by allowing $\infty$ as exponent.}) \emph{absorptive polynomials}.
We summarize the definition and main properties given in \cite{DannertGraNaaTan21}.
Essentially, an absorptive polynomial such as $ab^3 + c^\infty$ is a sum of monomials over a finite set of variables $X$,
but without coefficients and with exponents from $\N \cup \{\infty\}$ (where $n < \infty$ for $n \in \N$).
Monomial multiplication is defined as usual (with $n+\infty = \infty$).

The key ingredient is absorption among monomials.
We say that a monomial $m_1$ \emph{absorbs} $m_2$, if all its exponents are smaller (or equal).
Formally, $m_1 \absorb m_2$ if $m_1(x) \le m_2(x)$ for all $x \in X$, where $m_1(x)$ denotes the exponent of $x$ in $m_1$.
For example, $ab^2 \absorb a^\infty b^2$ and $a \absorb ab$, but $a^2b$ and $ab^2$ are incomparable.
In an absorptive polynomial, we omit all monomials that would be absorbed, so absorptive polynomials
are $\absorb$-\emph{antichains} of monomials. Consequently, addition and multiplication
are defined as usual, but afterwards we only keep the $\absorb$-maximal monomials.
For example, $(ab^2 + a^2b) \bcdot a^\infty = a^\infty b^2 + a^\infty b = a^\infty b$.

We write $\Sinf[X]$ for the semiring of absorptive polynomials over the finite variable set $X$.
The $0$ and $1$-elements are the empty polynomial and the single monomial $1$ (with all zero exponents).
This defines an absorptive, fully-continuous semiring \cite{DannertGraNaaTan21}.
In fact, $\Sinf[X]$ is the most general such semiring:

\begin{theorem}[Universal Property, \cite{DannertGraNaaTan21}]
\label{universality}
Every mapping $h \colon X \to \K$ into an absorptive, fully-continuous semiring $\K$
uniquely extends to a fully-continuous semiring homomorphism $h \colon \Sinf[X] \to \K$.
\end{theorem}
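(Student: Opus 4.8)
The plan is to define the extension $h$ in essentially the only way the axioms permit, verify that it is a well-defined fully-continuous homomorphism, and then read off uniqueness from the shape of the construction. On a single monomial $m = \prod_{x \in X} x^{m(x)}$ I set $h(m) = \prod_{x \in X} h(x)^{m(x)}$, where a finite exponent is interpreted by ordinary iterated multiplication in $\K$ and the exponent $\infty$ by the infinitary power $h(x)^\infty = \Inf_{n \in \N} h(x)^n$, which exists because $\K$ is absorptive and fully continuous. Writing an element $P \in \Sinf[X]$ as the antichain of its monomials $m_1,\dots,m_k$, I extend additively by $h(P) = \sum_i h(m_i)$, using $+$ of $\K$. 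Since every element of $\Sinf[X]$ is a finite sum of such monomials, this definition is forced, which is what will later yield uniqueness.

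\textbf{Well-definedness and the homomorphism laws.} The crucial point is that $h$ is insensitive to absorbed monomials, so that it respects the re-absorption built into the operations of $\Sinf[X]$. If $m \absorb m'$, then $m' = m \bcdot m''$ for a monomial $m''$, and since multiplication is decreasing in $\K$ we get $h(m') = h(m)\bcdot h(m'') \le_\K h(m)$; as $+$ is the join for $\le_\K$, adding or deleting $m'$ leaves the value $h(P)$ unchanged. Granting this, $h(0)=0$ and $h(1)=1$ are immediate, $h(P+Q)=h(P)+h(Q)$ follows by taking the union of monomials, and $h(P\bcdot Q)=h(P)\bcdot h(Q)$ follows from distributivity in $\K$ together with the monomial law $h(mn)=h(m)\bcdot h(n)$. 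The latter needs the exponent arithmetic $n+\infty=\infty$ to be matched on the $\K$-side by $h(x)^n\bcdot h(x)^\infty = h(x)^\infty$ and $h(x)^\infty\bcdot h(x)^\infty = h(x)^\infty$, both of which follow from the stated property $a\bcdot a^\infty = a^\infty$ (the second also using full continuity).

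\textbf{Full continuity is the main obstacle.} I must show $h(\Sup C) = \Sup h(C)$ and $h(\Inf C) = \Inf h(C)$ for every non-empty chain $C \subseteq \Sinf[X]$. The approach is to first describe the lattice operations $\Sup$ and $\Inf$ of a chain explicitly inside $\Sinf[X]$, and then transport them through $h$ using the continuity axioms $\Sup(a\circ C)=a\circ\Sup C$ and $\Inf(a\circ C)=a\circ\Inf C$ of $\K$. The genuinely infinitary phenomenon is confined to exponents that escalate to $\infty$ along a chain: the prototype is the descending chain of monomials $x \ge_\K x^2 \ge_\K \cdots$ with $\Inf_n x^n = x^\infty$, where $h$ agrees by the very definition $h(x^\infty)=\Inf_n h(x)^n$. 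For a general chain I expect to reduce to this case by a componentwise analysis, tracking for each of the finitely many variables the limiting exponent along the chain and using Dickson's Lemma to bound the monomial supports, so that $\Sup C$ and $\Inf C$ are realized as finite combinations of such limits; the continuity axioms of $\K$ then move $h$ past the suprema and infima. Pinning down the precise description of chain suprema and infima on antichains is where the real work lies.

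\textbf{Uniqueness.} Any fully-continuous homomorphism $g\colon \Sinf[X]\to\K$ with $g(x)=h(x)$ for all $x$ must coincide with $h$: it agrees on finite monomials by multiplicativity; on each $x^\infty = \Inf_n x^n$ by continuity, since $g(x^\infty)=\Inf_n g(x)^n = \Inf_n h(x)^n = h(x^\infty)$; hence on all monomials; and on finite sums by additivity. As every element of $\Sinf[X]$ is such a finite sum, $g=h$.
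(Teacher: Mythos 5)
First, note that this paper does not prove \Cref{universality} at all: it is imported verbatim from \cite{DannertGraNaaTan21}, so there is no in-paper proof to compare against. Judged on its own terms, your construction is the right one and the algebraic half is sound: the definition of $h$ on monomials (with $h(x)^\infty = \Inf_n h(x)^n$) and its additive extension over the finite antichain of monomials is indeed forced; the observation that $m \absorb m'$ gives $m' = m \bcdot m''$ (this still works with $\infty$-exponents) and hence $h(m') \le_\K h(m)$, so that re-absorption is invisible to $h$, is exactly the right lemma for well-definedness and for the homomorphism laws; and the uniqueness argument is complete.

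The genuine gap is the one you flag yourself: full continuity of $h$ is the mathematical substance of the theorem, and your treatment of it is a plan rather than a proof. Two things are actually missing. First, you never establish what $\Sup C$ and $\Inf C$ \emph{are} for a chain of antichains in $\Sinf[X]$; the prototype $\Inf_n x^n = x^\infty$ does not by itself describe, say, the infimum of a descending chain of polynomials each containing several incomparable monomials, where one must extract descending threads of monomials (via K\H{o}nig's lemma over the finite levels, with Dickson's Lemma controlling the antichains) and take the maximal thread-limits. Second, even granting that description, the interchange $h(\Inf C) = \Inf h(C)$ requires pushing an infimum past a \emph{sum of several terms that all vary along the chain}, whereas the continuity axiom of $\K$ only licenses $\Inf(a \circ C) = a \circ \Inf C$ for a fixed $a$ and a single chain $C$; reducing the former to iterated applications of the latter is precisely the delicate decomposition step, and it is not automatic (naively, $\Inf_i (a_i + b_i) \ge \Inf_i a_i + \Inf_i b_i$ and equality needs an argument). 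Until that reduction is carried out, the claim that $h$ is fully continuous -- and with it the theorem -- remains unproved.
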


\subsection{Logic}

We consider here the fixed-point logic LFP that extends first-order logic FO by least and greatest fixed-point formulae of the form $\psi(\ty) = [\lfp R \tx.\ \phi(R,\tx)](\ty)$ and $\psi(\ty) = [\gfp R \tx.\ \phi(R,\tx)](\ty)$. Here, $R$ is a relation symbol occurring only positively in $\phi$ and $\tx,\ty$ are variable tuples of matching arity.
Given a (Boolean) model $\AA$ and a tuple $\ta$ of elements of $\AA$,
the formula $\psi(\ta)$ holds in $\AA$, denoted $\AA \models \psi(\ta)$,
if $\ta$ is contained in the least (or greatest) fixed point of the operator
$F_\phi \colon R \mapsto \{ \ta \mid \AA \models \phi(R,\ta) \}$
that maps a relation $R$ to the relation consisting of those tuples for which $\phi$ holds.
For more background and a precise definition, we refer to \cite{Graedel+07}.

In order to generalize Boolean semantics to semiring semantics,
we first adapt the notion of a model $\AA$.
Instead of determining for each literal whether it is true or false in $\AA$,
we assign to each literal a semiring value, interpreting $0$ as \emph{false} and all other values as \emph{nuances of true}.
Special care is required to ensure that the assignment is consistent with respect to opposing literals (this is not always necessary, but often desirable).
In the following, let $\K$ be a semiring, $A$ a finite universe and $\tau$ a relational signature
(we drop $A$ and $\tau$ if clear from the context).
We denote the set of (instantiated) literals as
\begin{align*}
    \Lit_{A,\tau} ={} &\{ R \ta, \neg R \ta \mid R \in \tau \text{ of arity $k$}, \ta \in A^k \}
    \,\cup\, \{\ta = \tb, \ta \neq \tb \mid \ta,\tb \in A^k\}.
\end{align*}
Given a literal $L$, we write $\neg L$ for the opposing literal (identifying $\neg \neg L$ and $L$).
The role of the Boolean model $\AA$ is then replaced by a semiring interpretation $\pi$ that assigns semiring values to all literals.

\begin{definition}\label{defKInterpretation}
Let $\K$ be a semiring. A \emph{$\K$-interpretation} (over finite $A$ and $\tau$) is an assignment $\pi\colon \Lit_{A,\tau} \to \K$ that maps true (in)equalities to $1$ and false (in)equalities to $0$.
We say that $\K$ is \emph{model-defining}, if for each literal $L$, exactly one of $\pi(L)$ and $\pi(\neg L)$ is $0$.
\end{definition}

We lift $\pi$ to LFP-formulae in negation normal form, resulting in a semiring value $\pi \ext \psi$, by interpreting $\land$ and $\lor$ as semiring operations $\bcdot$ and $+$, and similarly quantifiers as products or sums over the (finite) universe.
For fixed-point formulae, we consider the induced operator $F_\phi$ analogous to the Boolean case (but acting on mappings $A^k \to \K$ instead of relations $R \subseteq A^k$) and compute its fixed point in the semiring $\K$.
We refer to \cite{DannertGraNaaTan21} for a formal definition;
the details are not relevant for this paper, as we will only consider a fixed formula $\psi$ (see \Cref{sec:TheFormula}).
An important property of the resulting semantics is that they are preserved by fully-continuous semiring homomorphisms,
in particular by polynomial evaluation in $\Sinf[X]$ due to \Cref{universality} (but not by polynomial evaluation of $\Nat[X]$ or formal power series!).

\section{Case Study: Computing Strategies with Semiring Semantics}
\label{sec:TheFormula}

This section connects the previous sections on semiring semantics and absorption-dominant strategies.
We focus on the formula for the winning region in a Büchi game and show that its value under semiring semantics can be understood in terms of (absorption-dominant) winning strategies.

\subsection{The Semiring Interpretation}

We want to use semiring semantics to analyze moves in winning strategies.
For this reason, we label edges with indeterminates $X$ (cf.\ \Cref{fig:RunningGame}) and use an $\Sinf[X]$-interpretation $\pitrack$ to track moves (i.e., edge literals $Euv$) via their indeterminates.
We assume the game graph to be fixed and do not wish to track information about the winning set $F$ or the active player at a certain node, hence we simply map all other literals over $\tau = \{E,F,V_0,V_1\}$ (e.g., $F v$, $V_0 v$, $\neg Euv$) to $0$ or $1$, depending on whether they are true or false in the game.
The resulting interpretation is almost Boolean and hence behaves very similar to the original game, except that we remember which edges are used in the evaluation of a formula.

\begin{definition}
Let $G = (V, V_0, V_1, E, F)$ be a Büchi game and
let $X = \{ X_{vw} \mid vw \in E \}$ be a set of indeterminates for all edges.
We define the $\Sinf[X]$-interpretation $\pitrack$ as follows (depending on $G$):
\[
    \pitrack(Evw) = X_{vw} \text{ for all } vw \in E, \quad
    \pitrack(L) = \begin{cases}
    1, &\text{ if } G \models L, \\
    0, &\text{ if } G \not\models L,
    \end{cases} \; \text{ for all other literals $L \in \Lit_{V,\tau}$.}
\]
\end{definition}

\subsection{The Formula}

It is well known that the winning region (of Player~0) in a Büchi game is definable in fixed-point logic.
Intuitively, the winning region is the largest set $Y$ such that from each position in $Y$,
Player~0 can enforce a visit to $Y \cap F$ (after at least one move).
In LFP, we can express the winning region by the following formula (see, e.g., \cite{CanavoiGraLesPak15,Walukiewicz02}):
\begin{align*}
  \FormulaWin(x) \coloneqq{} &\big[\gfp Y y.\ [\lfp Z z.\ \phi(Y,Z,z)](y) \big](x), \\
  \phi(Y,Z,z) \coloneqq{} &\Big(Fz \;\land\; ((V_0 z \land \E u (Ezu \land Yu)) \lor (V_1 z \land \A u (Ezu \to Yu)))\Big) \\
  {}\lor{} &\Big(\neg Fz \;\land\; ((V_0 z \land \E u (Ezu \land Zu))
  \lor (V_1 z \land \A u (Ezu \to Zu)))\Big).
\end{align*}

Given a $\K$-interpretation $\pi$ for a Büchi game $\Gg = (V,V_0,V_1,E,F)$, semiring semantics of the above formula induce%
\footnote{Here we first translate $Ezu \to Yu$ to the formula $\neg Ezu \lor (Ezu \land Yu)$ in negation normal form.}
the following fixed-point computation.
To simplify the presentation, we introduce two families of variables,
$\mathbf Y = (Y_v)_{v \in V}$ and $\mathbf Z = (Z_v)_{v \in V}$ that take values in $\K$.
We can then express the semiring semantics as $\pi \ext {\FormulaWin(v)} = Y^*_v$ where
$\mathbf Y^* = (Y_v^*)_{v \in V}$ is the \emph{greatest} solution to the equation system
\[
    \mathbf Y = \mathbf Z^*(\mathbf Y)
\]
where, in turn, $\mathbf Z^*(\mathbf Y)$ is the \emph{least} solution, given values $\mathbf Y = (Y_v)_{v \in V}$, to the equation system consisting of the following equation for all $v \in V$:
\begin{align*}
    Z_v ={} &\pi(Fv) \bcdot \Big((\pi(V_0 v) \bcdot \sum_{w \in V} (\pi(Evw) \bcdot Y_w)) + (\pi(V_1 v) \bcdot \prod_{w \in V} (\pi(\neg Evw) + \pi(Evw) \bcdot Y_w))\Big) \\
    {}+{} &\pi(\neg Fv) \bcdot \Big((\pi(V_0 v) \bcdot \sum_{w \in V} (\pi(Evw) \bcdot Z_w)) + (\pi(V_1 v) \bcdot \prod_{w \in V} (\pi(\neg Evw) + \pi(Evw) \bcdot Z_w))\Big)
\end{align*}
For most of this paper, we use the interpretation $\pitrack$ to track only \emph{moves} of winning strategies.
As $\pitrack$ maps most of the literals to $0$ or $1$, we can simplify the equations depending on $v$:

\begin{center}
\renewcommand{\arraystretch}{1.4}
\begin{tabular}{c|c|c}
& $v \in F$ & $v \notin F$ \\ \hline
$v \in V_0$ &
    $\displaystyle Z_v = \sum_{w \in vE} \pi(Evw) \bcdot Y_w$ &
    $\displaystyle Z_v = \sum_{w \in vE} \pi(Evw) \bcdot Z_w$ \\
$v \in V_1$ &
    $\displaystyle Z_v = \prod_{w \in vE} \pi(Evw) \bcdot Y_w$ &
    $\displaystyle Z_v = \prod_{w \in vE} \pi(Evw) \bcdot Z_w$
\end{tabular}
\end{center}

A good way to think about (and compute) the least and greatest solutions is the fixed-point iteration.
The idea is to start with each $Z_v$ set to the least element of the semiring, then apply the above equations (i.e., the induced operator $F_\phi$) to compute a next, larger semiring value and repeat this process until a fixed-point is reached (notice that the iteration can also be infinite, we then continue with the supremum/infimum).

\begin{Example}\label{exComputation}
Recall the simple game from \Cref{ex:redundantMove}
$\big($\!\!
\begin{tikzpicture}[baseline,font=\scriptsize]
\node [p0,label={below:$v$},anchor=base,yshift=.1cm] (0) {};
\node [p0,F,label={below:$w$},right of=0] (1) {};
\draw [arr]
    (0) edge [loop left] node {$a$} (0)
    (1) edge [loop right] node {$c$} (1)
    (0) edge node {$b$} (1);
\end{tikzpicture}
$\!\!\big)$.

\newcommand{\vv}[2]{\begin{pmatrix}#1 \\ #2\end{pmatrix}}
\newcommand{\tY}{\mathbf Y}
\newcommand{\tZ}{\mathbf Z}
\newcommand{\arrF}{\xmapsto{\!\!\!F_{\pitrack}^\phi\!}}

Using the interpretation $\pitrack$ corresponding to the edge labels,
we obtain the following fixed-point iteration.
We write the tuples $\tY$ and $\tZ$ as vectors $({Y_v} \; {Y_w})^T$ and $({Z_v} \; {Z_w})^T$.
\begin{center}
\begin{tikzpicture}[font=\small,node distance=1.5cm]
\node (Y) {$\tY:$};
\node (Z) [below=.7cm of Y] {$\tZ:$};

\node [right=0cm of Y] (Y1) {$\vv 1 1$};
\node [xshift=.7cm] at (Z -| Y1) (Z11) {$\vv 0 0$};
\node [right of=Z11] (Z12) {$\vv 0 c$};
\node [right of=Z12] (Z13) {$\vv {bc} c$};

\node [xshift=.7cm] (Y2) at (Y1 -| Z13) {$\vv {bc} c$};
\node [xshift=.7cm] at (Z -| Y2) (Z21) {$\vv 0 0$};
\node [right of=Z21] (Z22) {$\vv 0 {c^2}$};
\node [right of=Z22] (Z23) {$\vv {bc^2} {c^2}$};

\node [xshift=.7cm] (Y3) at (Y1 -| Z23) {$\vv {bc^2} {c^2}$};
\node [xshift=.7cm] at (Z -| Y3) (Z31) {$\dots\vphantom{\vv x x}$};

\node [right=2cm of Y3] (Ystar) {$\tY^* = \vv {bc^\infty} {c^\infty}$};

\path [draw,|->,every node/.style={anchor=base,yshift=4pt}]
    (Z11) edge node {$F_\phi$} (Z12)
    (Z12) edge node {$F_\phi$} (Z13)
    (Z21) edge node {$F_\phi$} (Z22)
    (Z22) edge node {$F_\phi$} (Z23)
;

\path [draw,->,short=-3pt]
    (Y1) edge (Z11)
    (Z13) edge (Y2)
    (Y2) edge (Z21)
    (Z23) edge (Y3)
    (Y3) edge (Z31)
;

\path[draw,<-,decorate,decoration={snake,amplitude=1.5pt,segment length=1.5mm,pre=lineto,pre length=3pt,post=lineto,post length=0pt}]
    (Ystar) -- ($(Y3)+(1.7cm,0)$);
\end{tikzpicture}
\end{center}

We obtain the overall result $\pitrack \ext {\FormulaWin(v)} = Y_v^* = \Inf_n bc^n = bc^\infty$ corresponding to the unique absorption-dominant strategy using edge $b$ once and $c$ infinitely often (cf.\ \Cref{ex:redundantMove}).
\end{Example}

\subsection{Connection to Strategies}

By mapping edges to semiring values, we can track edges through the fixed-point computation.
In \Cref{exComputation}, the resulting semiring value revealed how often each edge is used in the unique absorption-dominant winning strategy.
We now generalize this observation.
For simplicity, we only consider $\K$-interpretations $\pi$ that are \emph{edge tracking} for a given game $\Gg$.
That is, they may assign arbitrary values to positive edge literals $Evw$, but all other literals are mapped to $0$ or $1$ in accordance with $\Gg$.
To make the connection to strategies explicit, we first define semiring values for strategies based on their edges.

\begin{definition}
Let $\Ss$ be a strategy in a Büchi game $\Gg = (V, V_0, V_1, E, F)$.
Let $\K$ be an absorptive, fully-continuous semiring and $\pi$ an edge-tracking $\K$-interpretation on $\Gg$.
The \emph{$\K$-value} of $\Ss$ is the product of the values for all edges appearing in $\Ss$.
Formally,
\[
    \pi \ext \Ss \coloneqq \prod_{vw \in E} \pi(Evw)^{\ecount \Ss {vw}}.
\]
\end{definition}

The semiring value of $\FormulaWin$ can then be expressed as the sum over the values of all winning strategies.
The proof of this result (see \cite{GraedelLucNaa21}) is based on a similar result for LFP model-checking games in \cite{DannertGraNaaTan21} which is rather involved due to both infinite strategy trees and infinite fixed-point iterations.

\begin{restatable}[Sum of Strategies]{theorem}{SumOfStrategies}
\label{thmSumOfStrategiesTracking}
Let $\Gg$ be a Büchi game and $v$ a position in $\Gg$.
Let $\K$ be an absorptive, fully-continuous semiring and $\pi$ an edge-tracking $\K$-interpretation.
Then,
\[
    \pi \ext {\FormulaWin(v)} = \sum \big\{ \pi \ext \Ss \;\big|\;
    \text{$\Ss \in \WinStrat_\Gg(v)$ is absorption-dominant from $v$} \big\}.
\]
\end{restatable}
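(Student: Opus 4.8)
The plan is to prove the two inequalities $\le_\K$ and $\ge_\K$ separately, exploiting that the natural order is a complete lattice and that both sides are computed by fixed-point iterations. The right-hand side is a finite sum (there are only finitely many absorption-dominant winning strategies by the preceding corollary), and each summand $\pi\ext\Ss$ is a well-defined semiring element because $\pi(Evw)^\infty$ is available through full continuity. First I would establish the \emph{easy} direction: every absorption-dominant winning strategy $\Ss$ contributes a value that is $\le_\K$ the formula value. The idea is that a winning strategy, read as a subtree of the tree unraveling, induces a consistent choice of witnesses in the least/greatest fixed-point iteration defining $\FormulaWin(v)$. Concretely, the $\gfp$ over $\mathbf Y$ is approached from above and the inner $\lfp$ over $\mathbf Z$ from below; since the outer iteration is decreasing and we only need $\pi\ext\Ss \le_\K Y_v^*$, I would show that $\pi\ext\Ss$ survives every stage of the $\gfp$-iteration, using that within each ``round trip to $F$'' the strategy supplies a finite $Z$-witness and that multiplication is decreasing (absorption) so products of edge-values only shrink.

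The harder direction is to show $\pi\ext{\FormulaWin(v)} \le_\K \sum_\Ss \pi\ext\Ss$, i.e.\ that the formula value is \emph{covered} by the strategies. The natural approach is to analyze the fixed-point iteration and extract, from each monomial-like contribution accumulated during the $\gfp$/$\lfp$ computation, an actual winning strategy whose $\K$-value absorbs that contribution. Because the semiring is universal ($\Sinf[X]$, by \Cref{universality}), it suffices to prove the statement in $\Sinf[X]$ with $\pi = \pitrack$ and then transport it along the fully-continuous homomorphism $h\colon\Sinf[X]\to\K$ induced by any edge-tracking $\pi$; homomorphisms preserve fixed points and sums, so the general case follows from the polynomial case. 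In $\Sinf[X]$ each element is a finite antichain of monomials, and I would argue that every monomial of $Y_v^*$ corresponds to the edge profile of some winning strategy: the finite exponents record edges used finitely often and the $\infty$-exponents record edges used infinitely often, precisely matching $\pi\ext\Ss = \prod_{vw}\pi(Evw)^{\ecount\Ss{vw}}$.

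The main obstacle, as the authors themselves flag, is the genuine fixed-point alternation combined with infinite strategy trees and possibly infinite (transfinite) iterations. The inner $\lfp$ converges from below and only captures finite ``paths toward $F$,'' while the outer $\gfp$ converges from above and stabilizes the infinitely-often-visited part; reconciling these two limits so that the extracted object is a genuine, prefix-closed, winning strategy tree — rather than a formal bookkeeping of monomials — is delicate. In particular one must show that absorption exactly discards the non-dominant strategies: a strategy with a redundant repeated move yields a monomial that is absorbed (its exponents are pointwise larger), so it never appears as a maximal monomial of $Y_v^*$, which is why the sum ranges only over absorption-dominant strategies. I would expect to invoke Dickson's Lemma (as in the proof of \Cref{stratInfinitePositional}) to guarantee that the extraction terminates with finitely many dominant strategies of bounded height, and to verify that the substrategies stabilizing under the $\gfp$-iteration play positionally from infinitely-occurring positions, matching the compact-representation corollary. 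The cleanest route is to reduce to $\Sinf[X]$ via universality and then perform a careful induction on the fixed-point stages, tracking the correspondence between maximal monomials and absorption-dominant strategy trees; the bulk of the technical difficulty lies in that correspondence, which is why the full argument is deferred to \cite{GraedelLucNaa21}.
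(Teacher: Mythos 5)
Your opening moves are sound and agree with what the paper relies on: reducing to $\Sinf[X]$ via \Cref{universality} is legitimate (a fully-continuous homomorphism preserves the fixed-point semantics, finite sums, and infinitary powers, and the set of absorption-dominant strategies is a property of the game, not of the semiring), the right-hand side is indeed a finite sum, and the inequality $\sum_\Ss \pi\ext\Ss \le_\K \pi\ext{\FormulaWin(v)}$ can be pushed through the decreasing $\gfp$-iteration as you sketch, using K\H{o}nig's lemma to obtain a finite witness for each inner $\lfp$ and absorption to compare partial products with the full product $\pi\ext\Ss$. The genuine gap is the converse inequality, which is where the theorem actually lives: your statement that ``every monomial of $Y_v^*$ corresponds to the edge profile of some winning strategy, with $\infty$-exponents recording edges used infinitely often'' is essentially a restatement of the claim to be proved, and ``a careful induction on the fixed-point stages'' is a placeholder, not an argument. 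Concretely, a monomial surviving into the (possibly transfinite) greatest fixed point only tells you that at each stage some stage-dependent finite unfolding produces a monomial absorbing it; you must glue these finite approximants into a single infinite, prefix-closed, finitely branching strategy tree all of whose plays visit $F$ infinitely often, and then verify that the exponents of the surviving monomial coincide with the true edge multiplicities of that tree --- in particular that an exponent $\infty$ is realized by genuinely infinitely many occurrences of the edge rather than by an unbounded family of finite approximations. Dickson's Lemma gives finiteness of the antichain of dominant edge profiles but contributes nothing to this extraction step, which is the entire technical content.

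The paper does not attempt the direct induction you propose. Its proof (deferred to the full version \cite{GraedelLucNaa21}) instead factors through the general Sum-of-Strategies characterization for LFP model-checking games established in \cite{DannertGraNaaTan21}: one relates strategies in the Büchi game to strategies in the model-checking game for $\FormulaWin(v)$ and imports the already-proved correspondence between alternating fixed-point values and valuations of model-checking strategies. That route buys exactly the piece you leave open --- the passage from limits of fixed-point approximants to actual infinite strategy trees --- at the cost of setting up the translation between the two kinds of games. Your route is not wrong in principle, but as written it would require reproving that machinery from scratch, and the proposal stops precisely where that work begins.
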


It is in fact this central result that motivated the notion of \emph{absorption-dominant} strategies.
However, as we have already discussed, these may also be interesting in their own right if one is interested in minimal winning strategies.

\begin{Example}
For the edge-tracking interpretation $\pitrack$ induced by the edge labels in  \Cref{fig:RunningGame}, we obtain
\begin{align*}
    \pitrack \ext {\FormulaWin(v)} ={}
    &(abcd)^\infty + 
    abc \, e^2 h^2 (gkm)^\infty +
    abc \, f^2 (gkm)^\infty +
    abc \, ef h (gkm)^\infty.
\end{align*}

There are four monomials, corresponding to four equivalence classes of absorption-dominant strategies.
Each monomial reveals the edges that appear in the corresponding strategies, so we see that the first three monomials belong to positional (and hence uniquely defined) strategies.
The last monomial belongs to the non-positional strategy shown in \Cref{fig:RunningStrategy} (and its switched version, see \Cref{ex:Weakpos}).
The values of all other strategies are strictly absorbed by one of these monomials.
\end{Example}

\subsection{Strategy Analysis}
\label{sec:StrategyTracking}

For the general case, we fix a Büchi game $\Gg$ and focus on the $\Sinf[X]$-interpretation $\pitrack$ with $X = \{ X_{uv} \mid u,v \in \Gg \}$.
The values $\pitrack \ext \Ss$ are monomials and we can read off the number of occurrences of each edge in $\Ss$ from the exponents, i.e., the monomial is a representation of the edge profile $\ep \Ss$.
In particular, $\pitrack \ext {\Ss_1} \absorb \pitrack \ext {\Ss_2}$ if,
and only if, $\Ss_1 \absorb \Ss_2$.
The fact that absorptive polynomials are always finite \cite{DannertGraNaaTan21} is thus another way to see that the number of absorption-dominant strategies is finite.

What can we learn from the polynomial $\pitrack \ext {\FormulaWin(v)}$?
First, $\pitrack \ext {\FormulaWin(v)} \neq 0$ holds if, and only if, Player~0 has a winning strategy from $v$.
By \Cref{thmSumOfStrategiesTracking}, we can further derive information about all absorption-dominant strategies.
More precisely, we learn which edges each absorption-dominant strategy uses and how often they appear in the strategy tree.
Knowing the edge profile immediately reveals whether the strategy is positional and what the positional choices are.
By counting monomials, we can thus count the positional strategies, as well as the absorption-dominant strategies up to absorption-equivalence.

We can further answer questions such as: can Player~0 still win if we remove edge $e$?
This is the case if, and only if, the polynomial $\pitrack \ext {\FormulaWin(v)}$ contains a monomial without the variable $X_e$ (if there is a winning strategy without $e$, then there is also an absorption-dominant strategy and hence a monomial without $X_e$).
Going further, a more interesting question is: can Player~0 still win if edge $e$ may only be used finitely often in each play?
The answer is not immediately obvious.
Consider for example the strategy $\Ss$ in \Cref{fig:RunningStrategy}.
The edge $k$ occurs infinitely often in the strategy tree and we get $\pitrack \ext \Ss = abcefh g^\infty k^\infty m^\infty$.
However, $k$ is clearly played only once in each play consistent with $\Ss$, whereas edge $m$ is played infinitely often.
We cannot distinguish edges $k$ and $m$ just from $\pitrack \ext \Ss$, but we can do so if we compute $\pitrack \ext {\FormulaWin(w)}$ for all positions $w \in V$, by the following criterion.

\begin{proposition}
Let $\Ss \in \WinStrat_\Gg(v)$ be absorption-domi\-nant from $v$, and let $e = uw \in E$ be an edge with $\ecount \Ss e = \infty$.
Then there is a unique positional strategy $\Ss_w \in \WinStrat_\Gg(w)$ such that $\pitrack \ext {\Ss_w} \absorb \pitrack \ext \Ss$.
Moreover, $\Ss$ admits a play in which $e$ occurs infinitely often if, and only if, $e$ occurs
in $\Ss_w$.
\end{proposition}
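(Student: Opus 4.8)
The plan is to reduce everything to the sub-game spanned by the positions and edges that occur infinitely often in $\Ss$. Write $V_\infty$ for the set of positions occurring infinitely often in $\Ss$ and $E_\infty$ for the set of edges occurring infinitely often. Since $e = uw$ occurs infinitely often, so does its head $w$, hence $w \in V_\infty$, and by \Cref{stratInfinitePositional} the strategy $\Ss$ plays positionally from $w$ and from every position in $V_\infty$. First I would record the structural fact that $(V_\infty, E_\infty)$ is a closed sub-game: a position $p \in V_\infty \cap V_0$ has exactly one outgoing edge used by $\Ss$ — its unique positional choice, which lies in $E_\infty$ and leads back into $V_\infty$ — while a position $p \in V_\infty \cap V_1$ has all of its $\Gg$-successors in $E_\infty$ (each is played at every occurrence of $p$) and again inside $V_\infty$. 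In this sub-game Player~0 has no choices at all, so it carries a unique Player~0 strategy, which I take to be $\Ss_w$; the second clause also guarantees that $\Ss_w$ is a legitimate strategy in $\Gg$ (all $V_1$-successors are present) and that every play of $\Ss_w$ stays in $V_\infty$.

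For existence and the fact that $\Ss_w$ is winning, I would exhibit an occurrence of $w$ in $\Ss$ whose substrategy uses only $E_\infty$-edges: since each edge of $E \setminus E_\infty$ is played only finitely often and $E$ is finite, the set of nodes of $\Ss$ at which such an edge is played is finite, whereas $w$ occurs infinitely often, so some occurrence $\rho w$ has none of these nodes below it. The substrategy $\Ss|_{\rho w}$ is then confined to $(V_\infty, E_\infty)$, agrees with the positional choices of $\Ss$ at every $V_0$-position (all of which lie in $V_\infty$), and is therefore exactly $\Ss_w$; being a substrategy of the winning strategy $\Ss$ it is winning. Since $\Ss_w$ uses no edge outside $E_\infty$, we get $\ecount{\Ss_w}{f} = 0 \le \ecount{\Ss}{f}$ for $f \notin E_\infty$ and $\ecount{\Ss_w}{f} \le \infty = \ecount{\Ss}{f}$ for $f \in E_\infty$, that is $\Ss_w \absorb \Ss$, which is the claimed $\pitrack\ext{\Ss_w} \absorb \pitrack\ext{\Ss}$. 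For uniqueness I would take any positional winning $\Ss_w'$ with $\Ss_w' \absorb \Ss$ and show by induction along its plays that it coincides with $\Ss_w$: at a reached $V_0$-position $p \in V_\infty$ the choice $pp'$ of $\Ss_w'$ forces $\ecount{\Ss}{pp'} \ge 1$, but $\Ss$ plays positionally from $p$ and hence uses only its unique positional edge out of $p$, so $\Ss_w'$ takes that same $E_\infty$-edge and stays in $V_\infty$; at $V_1$-positions both strategies carry all successors. The base case is $w \in V_\infty$, and the step also uses that $\ecount{\Ss_w'}{f} = \infty$ forces $f \in E_\infty$, so $\Ss_w'$ never leaves the sub-game.

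Finally, I would establish the equivalence by reachability inside $(V_\infty, E_\infty)$. If $e$ occurs in $\Ss_w$, then $u$ is reachable from $w$ along $E_\infty$-edges of $\Ss_w$; following such a path from the occurrence $\rho w$ and then taking $e$ returns to $w$, and since $\Ss_w$ is positional the same finite pattern repeats forever, yielding a play of $\Ss$ on which $e$ occurs infinitely often (the $V_0$-choices match $\Ss$ and the required $V_1$-successors are present in $\Ss$). Conversely, if a play $\beta$ of $\Ss$ uses $e$ infinitely often, then past some index $\beta$ uses only edges occurring infinitely often in $\beta$, all of which lie in $E_\infty$; a suffix of $\beta$ starting at a sufficiently late occurrence of $w$ is then a play of $\Ss_w$ that reaches $u$ and takes $e$, so $e$ occurs in $\Ss_w$. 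The main obstacle is the bridge between the infinite strategy tree and the finite positional object $\Ss_w$ — concretely, pinning down that $\Ss$ is positional on all of $V_\infty$ and that the strategy induced on the closed sub-game is genuinely winning; once the sub-game is isolated, uniqueness and the reachability equivalence reduce to careful bookkeeping with edge multiplicities, for which \Cref{stratInfinitePositional} supplies the essential input.
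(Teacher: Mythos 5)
Your proof is correct, and it follows what is clearly the intended route: the paper defers this proof to the full version, but the construction of $\Ss_w$ as the forced positional strategy on the sub-game of infinitely-occurring positions and edges, with Proposition~\ref{stratInfinitePositional} supplying positionality on that sub-game, is exactly the machinery the surrounding text sets up. All the key steps check out, including the selection of an occurrence $\rho w$ with no finitely-played edge below it (each such tree node lies below only finitely many occurrences of $w$) and the reduction of the ``infinitely often in some play'' criterion to reachability of $e$ inside $\Ss_w$.
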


\begin{Example}
Consider the strategy $\Ss$ in \Cref{fig:RunningStrategy} with $\pitrack \ext \Ss = abcefh g^\infty k^\infty m^\infty$ and the edge $k$ from $u$ to $w$.
Since edge $n$ does not occur in $\pitrack \ext \Ss$, the only winning strategy $\Ss_w$ from $w$ we need to consider is the strategy that always stays at $w$, with $\pitrack \ext {\Ss_w} = m^\infty \absorb \pitrack \ext \Ss$.
As $k$ does not occur in $\Ss_w$, we conclude that it occurs only finitely often (and hence at most once) in each play consistent with $\Ss$.

If, on the other hand, we consider edge $m$ (which also leads to position $w$), we see that $m$ occurs in $S_w$ and we can thus conclude that $\Ss$ contains a play visiting $m$ infinitely often.
\end{Example}

Summarizing the results of this section, we see that semiring semantics in $\Sinf[X]$ is very informative and allows us to derive important information about the winning strategies.

\begin{corollary}\label{corPolynomialInformation}
From the polynomial $\pitrack \ext {\FormulaWin(v)}$, we can efficiently (in the size of the polynomial) derive the following information:\\
-\quad whether Player $0$ wins from $v$,\\
-\quad the edge profiles of all absorption-dominant winning strategies from $v$,\\
-\quad the number and precise shape of all positional winning strategies from $v$,\\
-\quad whether Player $0$ can still win from $v$ if only a subset of the edges is allowed.

\noindent From the polynomials $\pitrack \ext {\FormulaWin(v)}$, for all positions $v$,
we can further derive for each absorption-dominant strategy and each edge, how often the edge can occur in a play consistent with the strategy.
\end{corollary}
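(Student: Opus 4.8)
The plan is to obtain the whole corollary as a consequence of the Sum-of-Strategies theorem (\Cref{thmSumOfStrategiesTracking}) instantiated in $\Sinf[X]$ with the interpretation $\pitrack$. This yields
\[
  \pitrack \ext {\FormulaWin(v)} = \sum \big\{ \pitrack \ext \Ss \;\big|\; \Ss \in \WinStrat_\Gg(v) \text{ is absorption-dominant from } v \big\},
\]
where each $\pitrack \ext \Ss$ is a single monomial whose exponent of $X_{uw}$ equals $\ecount \Ss {uw}$, i.e.\ a faithful encoding of the edge profile $\ep \Ss$, and where $\pitrack \ext {\Ss_1} \absorb \pitrack \ext {\Ss_2}$ if and only if $\Ss_1 \absorb \Ss_2$. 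Since an absorptive polynomial is a $\absorb$-antichain of monomials and the absorption-dominant strategies are exactly the $\absorb$-maximal winning strategies, the distinct monomials of $\pitrack \ext {\FormulaWin(v)}$ stand in bijection with the absorption-equivalence classes of absorption-dominant winning strategies from $v$. Every item of the corollary then becomes a read-off from this antichain, and each read-off is a single scan over the monomials, which gives the claimed efficiency.

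With this dictionary the first three items are immediate. For the first, $\pitrack \ext {\FormulaWin(v)} \neq 0$ holds exactly when the sum is non-empty, i.e.\ when an absorption-dominant winning strategy exists; and a winning strategy exists iff an absorption-dominant one does, since by the finiteness argument (via Dickson's Lemma) the edge profiles of winning strategies possess $\absorb$-maximal elements, and every term $\pitrack \ext \Ss$ is a nonzero monomial. For the second, the edge profile of each class is read directly off the exponents of the corresponding monomial. For the third, I would single out the positional strategies by the criterion that a monomial is positional precisely when, for each $v \in V_0$, at most one outgoing variable $X_{vw}$ with $w \in vE$ occurs in it: such a profile forces a unique choice at every occurrence of $v$, so the strategy plays positionally everywhere and is uniquely reconstructed by following these choices from $v$. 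Since positional strategies coincide with the strictly absorption-dominant ones (shown above), each positional strategy contributes exactly one monomial and no two share a monomial, so counting the positional monomials counts the positional winning strategies and exhibits their precise shape.

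The fourth item and the final sentence require slightly more than a read-off. For the fourth, I would first record a downward-closure observation: if some winning strategy uses only edges from a subset $E' \subseteq E$, then the absorption-dominant strategy $\absorb$-below it (which exists by the same Dickson argument) also uses only edges from $E'$, because $\ecount {\Ss'} e \le \ecount \Ss e$ can never turn a zero count positive. Hence Player~0 wins using only $E'$ iff some monomial of $\pitrack \ext {\FormulaWin(v)}$ mentions only variables $X_e$ with $e \in E'$, which is again a scan. For the final sentence I would split on the exponent: an edge $e$ with finite exponent in a monomial occurs finitely often in every consistent play, whereas for an edge $e = uw$ with exponent $\infty$ I invoke the preceding Proposition, which states that $\Ss$ admits a play using $e$ infinitely often exactly when $e$ occurs in the unique positional strategy $\Ss_w$ from $w$ whose value absorbs $\pitrack \ext \Ss$. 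That $\Ss_w$ is located as the unique positional monomial of $\pitrack \ext {\FormulaWin(w)}$ absorbing $\pitrack \ext \Ss$, which is exactly why the polynomials at all positions $w$ are needed.

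The main obstacle is conceptual rather than computational, and it sits entirely in this last claim: the polynomial at $v$ records only how often each edge appears in the \emph{whole} strategy tree, and this conflates an edge that recurs across infinitely many distinct plays (such as $k$ in \Cref{fig:RunningStrategy}) with one that recurs within a single play (such as $m$). The preceding Proposition is precisely what disentangles these two situations, so the only genuine work in the corollary is to package its conclusion as a decision procedure over the monomials of $\pitrack \ext {\FormulaWin(w)}$ for all $w$. Everything else reduces to bookkeeping on the antichain of monomials together with the downward-closure remark for restricted edge sets.
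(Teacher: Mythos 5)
Your proposal is correct and follows essentially the same route as the paper: the corollary is obtained as a read-off from the Sum-of-Strategies Theorem via the monomial/edge-profile correspondence, with positional strategies identified from the profiles, the restricted-edge-set question handled by the existence of an absorption-dominant strategy below any winning strategy, and the final claim delegated to the preceding proposition on edges with infinite multiplicity. Your explicit positionality criterion (at most one outgoing variable per $V_0$-position) and the downward-closure remark just spell out steps the paper leaves implicit in its surrounding discussion.
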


Regarding the complexity, we remark that it is not obvious how to compute $\pitrack \ext {\FormulaWin(v)}$, as depending on the semiring, the fixed-point iteration may be infinite (even if the game is finite).
However, it has been shown in \cite{Naaf21} that in absorptive, fully-continuous semirings, fixed points of polynomial systems, such as the ones induced by $\FormulaWin(v)$, can be computed using a number of semiring operations (including the infinitary power operation) that is polynomial in the size of the game.

Notice, however, that for $\pitrack \ext {\FormulaWin(v)}$, the number of monomials and hence the cost to compute semiring operations can become exponentially large.
This is not avoidable, as the number of (positional) winning strategies can be exponential.
More efficient algorithms can be obtained by modifying $\pitrack$ to track only some of the edges (mapping the rest to $1$).
The resulting interpretation remains edge-tracking, so the Sum-of-Strategies Theorem still applies and we can see which sets of tracked edges (if any) are required for a winning strategy.
Even then, some of the questions in \Cref{corPolynomialInformation} can also be solved by direct methods which may be more efficient.
Thus, the main benefit of semiring semantics in $\Sinf[X]$ does not lie in a more efficient method to compute 
some specific winning strategy, but rather in providing a general and compact description of all important strategies at once, from which we can directly derive the answers to many different questions concerning the strategy analysis of a Büchi game.

\subsection{Reverse Analysis}

Instead of tracking strategies in a fixed game, we may also ask questions such as:
assuming Player~1 wins from $v$, what are minimal modification to $\Gg$ such that instead Player~0 wins?
The generality of semiring semantics enables us to answer such questions by choosing appropriate semirings and interpretations.

More precisely, let $\Gg = (V,V_0,V_1,E,F)$ be a Büchi game and $v \in \Gg$ a position from which Player~1 wins.
Let $E^- \subseteq E$ and $E^+ \subseteq V^2 \setminus E$ be sets of edges we are allowed to delete or add, respectively.
We call a set of edges in $E^\pm \coloneqq E^- \cup E^+$ a \emph{repair} if Player~0 wins when these edges are deleted or added.
Our goal is to determine all (preferably minimal) repairs.
We achieve this by evaluating $\FormulaWin(v)$ in a modified polynomial semiring, similar to the computation of repairs for database queries in \cite{XuZhaAlaTan18},
except that here we need absorptive polynomials to deal with fixed points.

\bigskip\noindent\textbf{Dual-Indeterminates.}
To track negative information, such as the absence of an edge, we follow the approach in \cite{GraedelTan17,XuZhaAlaTan18,DannertGraNaaTan21} and extend our semiring by dual-indeterminates $\nnX = \{ \nnx \mid x \in X \}$.
The idea is to label a literal and its negation by corresponding indeterminates $x$ and $\nnx$.
We must then avoid monomials such as $x\nnx$, as they represent contradictory information.
To this end, we consider the quotient of $\Sinf[X \cup \nnX]$ with respect to the congruence generated by $x \cdot \nnx = 0$ for $x \in X$
and refer to the resulting quotient semiring as dual-indeterminate absorptive polynomials $\Sinf[X,\nnX]$.
This semiring inherits most of the properties of $\Sinf[X]$.
Most importantly, any assignment $h \co X \cup \nnX$ that respects dual-indeterminates, i.e., $h(x) \cdot h(\nnx) = 0$, lifts to a fully-continuous homomorphism analogous to \Cref{universality}.

We then replace $\pitrack$ with an $\Sinf[X,\nnX]$-interpretation $\pirev^\pm$ for $X = \{ X_e \mid e \in E^\pm\}$:
if $vw \in E^\pm$, we set $\pirev^\pm(Evw) = X_{vw}$ and $\pirev^\pm(\neg Evw) = \nn{X_{vw}}$,
all other literals are mapped to $0$ or $1$ according to $\Gg$.
Notice that $\pirev^\pm$ is not model-defining, but still satisfies $\pirev^\pm(L) \bcdot \pirev^\pm(\neg L) = 0$ for all literals $L$.

\bigskip\noindent\textbf{Back and Forth between Monomials and Models.}
Let $X^\pm = \{ X_e \mid e \in E^+ \} \cup \{ \nn{X_e} \mid e \in E^- \}$.
Given $Y \subseteq X^\pm$, we further write $E(Y) = \{ e \mid X_e \in Y \text{ or } \nn{X_e} \in Y \}$ for the set of edges mentioned in $Y$.
We denote the set of all (dual-)indeterminates occurring in a monomial $m$ by $\var(m) = \{ x \in X \cup \nnX \mid m(x) > 0 \}$.
By examining what combination of indeterminates from $X^\pm$ occur in the monomials of $\pirev^\pm \ext {\FormulaWin(v)}$, we can read off all minimal repairs as follows.

\begin{proposition}\label{reverseBothDirections}
In the above setting, the following holds:
\begin{enumerate}
\item 
Let $m \in \pirev^\pm \ext {\FormulaWin(v)}$ be a monomial.
Then the set $E(\var(m) \cap X^\pm)$ is a repair.

\item
Let $R \subseteq E^\pm$ be a repair.
Then there is a monomial $m \in \pirev^\pm \ext {\FormulaWin(v)}$ such that
$E(\var(m) \cap X^\pm) \subseteq R$.
If $R$ is minimal, then $E(\var(m) \cap X^\pm) = R$.
\end{enumerate}
\end{proposition}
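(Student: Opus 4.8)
The plan is to connect the monomials of $\pirev^\pm \ext {\FormulaWin(v)}$ to actual games obtained by applying a repair, using the Sum-of-Strategies Theorem (\Cref{thmSumOfStrategiesTracking}) together with the universal property of the dual-indeterminate semiring. The key observation is that choosing a subset $R \subseteq E^\pm$ of edges to modify corresponds to specializing the indeterminates via a suitable homomorphism, and that this homomorphism turns $\pirev^\pm$ into the (almost-Boolean) edge-tracking interpretation of the repaired game $\Gg_R$.

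\begin{proof}[Proof plan]
First I would make precise the repaired game: for $R \subseteq E^\pm$, let $\Gg_R$ be the game with edge set $(E \setminus (R \cap E^-)) \cup (R \cap E^+)$, and let $\pi_R$ be its edge-tracking interpretation (mapping every edge literal to $0$ or $1$ according to $\Gg_R$). By the universal property extending \Cref{universality} to $\Sinf[X,\nnX]$, the assignment that sends $X_e \mapsto 1$ and $\nn{X_e} \mapsto 0$ for $e \in R$, and $X_e \mapsto 0$, $\nn{X_e} \mapsto 1$ for $e \in E^\pm \setminus R$, induces a fully-continuous homomorphism $h_R \co \Sinf[X,\nnX] \to \Sinf[\emptyset] = \Bool$; this is well-defined because it respects $x \cdot \nn x = 0$. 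Since fully-continuous homomorphisms preserve the semiring semantics of $\FormulaWin$, we get $h_R(\pirev^\pm \ext{\FormulaWin(v)}) = \pi_R \ext{\FormulaWin(v)}$, and the right-hand side is $1$ exactly when Player~0 wins $\Gg_R$ from $v$, i.e.\ exactly when $R$ is a repair.

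For part~(1), I would take a monomial $m \in \pirev^\pm \ext{\FormulaWin(v)}$ and set $R = E(\var(m) \cap X^\pm)$. I need $h_R(m) \neq 0$: this holds because every $X_e$ in $m$ with $e \in E^+$ and every $\nn{X_e}$ in $m$ with $e \in E^-$ has its edge placed into $R$, so those indeterminates map to $1$, while indeterminates for the fixed edges in $E \setminus E^\pm$ are not affected and the dual-pairs that would force a $0$ never co-occur in a single monomial (by the congruence). Since $h_R(m) \neq 0$ and $m$ survives in the polynomial, $h_R(\pirev^\pm \ext{\FormulaWin(v)}) = 1$, so $R$ is a repair. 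For part~(2), given a repair $R$, we have $h_R(\pirev^\pm \ext{\FormulaWin(v)}) = 1 \neq 0$, so some monomial $m$ must satisfy $h_R(m) \neq 0$. The condition $h_R(m) \neq 0$ forces, for each indeterminate in $\var(m) \cap X^\pm$, that $h_R$ sends it to $1$, which by definition of $h_R$ means its edge lies in $R$; hence $E(\var(m) \cap X^\pm) \subseteq R$. If $R$ is minimal, then since part~(1) shows $E(\var(m) \cap X^\pm)$ is itself a repair contained in $R$, minimality forces equality.

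The main obstacle I expect is the careful bookkeeping in the dual-indeterminate quotient semiring: I must verify that $h_R$ is genuinely well-defined on the quotient $\Sinf[X,\nnX]$ (respecting $x\cdot\nn x = 0$) and fully continuous, so that the preservation of fixed-point semantics actually applies, and I must handle the interaction between the tracked edges in $E^\pm$ and the untracked edges (mapped to $0/1$ already by $\pirev^\pm$) so that specialization by $h_R$ yields precisely $\pi_R$ rather than some other interpretation. The combinatorial direction (reading off edges from $\var(m)$) is routine once this algebraic bridge is in place; the delicate point is ensuring the homomorphism correctly collapses each indeterminate and its dual in a way consistent with the repaired game.
\end{proof}
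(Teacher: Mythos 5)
Your overall strategy is the same as the paper's: specialize the dual-indeterminate polynomial to the Boolean semiring via a fully-continuous homomorphism that encodes a candidate repair, use preservation of semiring semantics under such homomorphisms to identify $h(\pirev^\pm \ext{\FormulaWin(v)})$ with the truth of $\FormulaWin(v)$ in the repaired game, and read the repairs off the surviving monomials. Packaging this as a single family $h_R$ indexed by subsets $R \subseteq E^\pm$ is a reasonable (and slightly cleaner) reorganization of the paper's two ad hoc homomorphisms; note that the Sum-of-Strategies Theorem, which you invoke in your plan, is not actually needed anywhere in the argument.

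However, your definition of $h_R$ contains a genuine error for the deletion edges. Membership $e \in R$ for $e \in E^-$ means that $e$ is \emph{deleted} in $\Gg_R$, so the corresponding literal must become false: the correct specialization is $h_R(X_e) = \Bzero$ and $h_R(\nn{X_e}) = \Bone$ for $e \in R \cap E^-$, and dually $h_R(X_e) = \Bone$, $h_R(\nn{X_e}) = \Bzero$ for $e \in E^- \setminus R$. Your rule treats $e \in R$ uniformly as ``the edge is present,'' which is only right for $E^+$. This is not a harmless typo: with your $h_R$, in part (1) any monomial containing $\nn{X_e}$ for some $e \in E^-$ has $e \in R = E(\var(m) \cap X^\pm)$ and hence $h_R(\nn{X_e}) = \Bzero$, so $h_R(m) = \Bzero$ and the argument collapses; and in part (2) the implication ``$h_R$ sends the indeterminate to $\Bone$, hence its edge lies in $R$'' is reversed for $\nn{X_e}$ with $e \in E^-$. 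Once $h_R$ is corrected so that it realizes exactly the edge relation of $\Gg_R$ (present iff $e \in R$ for $e \in E^+$; present iff $e \notin R$ for $e \in E^-$), all your verifications go through: in part (1) the indeterminates of $\var(m)$ lying outside $X^\pm$ (namely $X_e$ with $e \in E^-$ and $\nn{X_e}$ with $e \in E^+$) also map to $\Bone$, because their duals cannot co-occur in $m$ by the congruence, which is exactly the point you gesture at. One minor further remark: there are no indeterminates for edges outside $E^\pm$ (those literals are already mapped to $0$ or $1$ by $\pirev^\pm$), so your clause about ``fixed edges in $E \setminus E^\pm$'' is vacuous.
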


\noindent
Before proving \Cref{reverseBothDirections}, we illustrate the computation of minimal repairs in a small example.

\begin{Example}
In the following game, Player~1 wins from $v$.
We are interested in the minimal repairs with $E^+ = \{c\}$ and $E^- = \{a,b\}$.
\begin{center}
\begin{tikzpicture}[node distance=1.5cm, baseline]
\node [p1,label={left:$v$}] (0) {};
\node [p0,F,right of=0] (1) {};
\draw [arr]
    (0) edge (1)
    (1) edge [bend left] node {$b$} (0)
    (1) edge [loop above,densely dotted,gray] node {$c$} (1)
    (0) edge [loop above] node {$a$} (0)
    ;
\end{tikzpicture}
\hspace{1cm}
$\displaystyle \pirev^\pm \ext {\FormulaWin(v)} = \nn{X_a} X_c^\infty + \nn{X_a}^\infty X_b^\infty$\\[.3cm]
\end{center}
Evaluating $\FormulaWin(v)$ in the $\Sinf[X,\nnX]$-interpretation $\pirev^\pm$ described above results in two monomials.
The first yields the repair $\{a,c\}$, the second yields the minimal repair $\{a\}$ (notice that $X_b \notin X^\pm$, as edge $b$ is already present).
The reason why we get two monomials is that we track also positive usage of edge $b$ by $X_b$, but are only interested in the negative indeterminate $\nn{X_b}$ for the repairs.
\end{Example}

\begin{proof}[Proof of \Cref{reverseBothDirections}]
We prove both statements by considering homomorphisms into the Boolean semiring $\Bool$.
For the first statement, let $m \in \pirev^\pm \ext {\FormulaWin(v)}$ be a monomial and
let $h \colon X \cup \nnX \to \Bool$ be the unique function that respects dual-indeterminates and satisfies
\begin{itemize}
\item $h(x) = \Bone$, for all $x \in \var(m)$
\item $h(X_e) = \Bzero$, if $X_e, \nn{X_e} \notin \var(m)$ and $e \in E^+$ (do not add $e$ without reason),
\item $h(X_e) = \Bone$, if $X_e, \nn{X_e} \notin \var(m)$ and $e \in E^-$ (do not remove $e$ without reason).
\end{itemize}

Then, $h$ lifts to a fully-continuous semiring homomorphism $h \colon \Sinf[X,\nnX] \to \Bool$ with $h(m) = \Bone$.
Moreover, $h \circ \pirev^\pm$ is a Boolean interpretation which corresponds to a Boolean model $\Gg'$.
Since semiring semantics are preserved by fully-continuous homomorphisms, we have $h \circ \pirev^\pm \ext {\FormulaWin(v)} = h(\pirev^\pm \ext {\FormulaWin(v)}) \ge h(m) = \Bone$ and hence $\Gg' \models \FormulaWin(v)$.
By the choice of $h$, the model $\Gg'$ is equal to $\Gg$ except that we add all edges $e \in X^+$ with $X_e \in \var(m)$, and remove all $e \in X^-$ with $\nn{X_e} \in \var(m)$.
Hence $\Gg'$ results from $\Gg$ by adding or deleting the edges $E(\var(m) \cap X^\pm)$, and since $\Gg' \models \FormulaWin(v)$, this set is a repair as claimed.

For the second statement, let $R \subseteq E^\pm$ be a repair and consider the repaired game $\Gg' \models \FormulaWin(v)$.
As $\Gg'$ differs from $\Gg$ only by edges in $E^\pm$, there is a unique assignment $h \colon X \cup \nnX \to \Bool$ such that $h \circ \pirev^\pm$ corresponds to $\Gg'$.
Again, $h$ lifts to a fully-continuous homomorphisms and we thus get $\Bone = h \circ \pirev^\pm \ext {\FormulaWin(v)} = h( \pirev^\pm \ext {\FormulaWin(v)})$.
So there must be a monomial $m \in \pirev^\pm \ext {\FormulaWin(v)}$ with $h(m) = \Bone$.
Consider the set $\var(m) \cap X^\pm$.
If $X_e \in \var(m) \cap X^\pm$, then $h(X_e) = \Bone$ and hence $e \in R$ by construction of $h$.
Further, $\nn{X_e} \in \var(m) \cap X^\pm$ implies $h(\nn{X_e}) = \Bone$ and thus again $e \in R$ by construction of $h$.
This proves $E(\var(m) \cap X^\pm) \subseteq R$.
If $R$ is minimal, we have equality:
otherwise $E(\var(m) \cap X^\pm)$ would be a smaller repair by the first statement, contradicting minimality.
\end{proof}

We remark that these results ignore the exponents of the monomials, so we could drop exponents from $\Sinf[X,\nnX]$ and work in the resulting, simpler semiring $\PosBool[X,\nnX]$ (but again, the number of minimal repairs can be exponential in the size of the game).
Further, the reverse analysis approach is not limited to questions about edges. We can also work with
interpretations that track the target set $F$ and thus answer questions such as \textit{how to choose or modify the target set so that Player $0$ wins?}

\section{Conclusion}

Based on a recent line of research on semiring provenance analysis that lead from database theory to semiring semantics for LFP,
we reported here on a case study that puts semiring semantics to use for a strategy analysis in Büchi games.
The choice of Büchi games has been motivated on one side by their relevance for applications in 
the synthesis and verification of reactive systems, on the other side because they provide one of the simplest non-trivial
cases of infinite games for which the definability of winning positions requires an alternation between least and greatest fixed points --
and can thus not be treated by simpler classes of semirings such as the $\omega$-continuous ones 
used for Datalog and reachability games. 

The aim of the case study was to illustrate how semiring semantics can be applied to more complex games,
featuring infinite plays and complicated winning conditions, and what kind of insights it provides (or fails to provide)
about the winning strategies in the game.
This is captured in the central Sum-of-Strategies Theorem and its applications.
This non-trivial result can be seen as a simpler version of the general Sum-of-Strategies characterization in terms of model-checking games in \cite{DannertGraNaaTan21}
and it essentially identifies the value of the statement that Player~0 wins with the sum of the
valuations of all (absorption-dominant) winning strategies.
While this applies to the class of all absorptive, fully-continuous semirings,
the most important semirings for our analysis are generalized absorptive polynomials $\Sinf[X]$.
Due to their universal property, these provide the most general information,
allowing us to read off the edge profiles of all absorption-dominant strategies.

With this information, we can count positional strategies,
we can determine whether a particular move is needed (once or even infinitely often) for winning the game,
and we can compute minimal ``repairs'' for a game.
The method of semiring valuations is rather flexible; we can use different semirings than $\Sinf[X]$, and we can tailor the set of moves that 
we track to make the resulting polynomial smaller and its computation more efficient.
We remark that, of course, there are also limitations to this method and not all relevant questions 
about strategies can be answered directly by semiring valuations.
As an example we show in the full version of this paper \cite{GraedelLucNaa21} 
that minimal cost computations provide serious obstacles to a semiring treatment.

The Sum-of-Strategies result motivates the general notion of absorption-dominant strategies
which captures strategies that are minimal with respect to the multiplicities of the edges they use.
To understand these strategies, we discussed how they relate to other classes of simple strategies, namely to positional and persistent strategies,
and we have shown that these form a strict hierarchy.

Finally, we remark that although Büchi games have been chosen as the topic of our case study,
the method of semiring valuations in absorptive semirings is not confined to this case. In principle,
it can be applied to different formulae and generalizes in particular to other games such as parity games,
as long as the winning positions are definable in fixed-point logic.
The win-formula for parity games is more complicated, and is parametrised by the number of priorities,
and so the Sum-of-Strategies Theorem requires different technical 
details, but can be established along the same lines.


\bibliographystyle{eptcs}
\bibliography{paper}

\end{document}